\newtheorem{definition}{Definition}
\newtheorem{proposition}{Proposition}
\theoremstyle{definition}
\newtheorem{exmp}{Example}[section]
\begin{document}
\medskip
%
\title{Achievable Rate Regions Using Novel Location Assisted Coding (LAC)}

\author{\IEEEauthorblockN{Thuan Nguyen}
\IEEEauthorblockA{School of Electrical and\\Computer Engineering\\
Oregon State University\\
Corvallis, OR, 97331\\
Email: nguyeth9@oregonstate.edu}
\and
\IEEEauthorblockN{Duong Nguyen-Huu}
\IEEEauthorblockA{School of Electrical and\\Computer Engineering\\
Oregon State University\\
Corvallis, OR, 97331\\
Email: nguyendu@eecs.oregonstate.edu}
\and
\IEEEauthorblockN{Thinh Nguyen}
\IEEEauthorblockA{School of Electrical and\\Computer Engineering\\
Oregon State University\\
Corvallis, 97331 \\
Email: thinhq@eecs.oregonstate.edu}
}


%


\maketitle

\begin{abstract}
The recent increase in number of wireless devices has been driven by the growing markets of smart homes and the Internet of Things (IoT).  As a result, expanding and/or efficient utilization of the radio frequency (RF) spectrum is critical to accommodate such an increase in wireless bandwidth.  Alternatively, recent free-space optical (FSO) communication technologies have demonstrated the feasibility of building WiFO, a high capacity indoor wireless network using the femtocell architecture.  Since FSO transmission does not interfere with the RF signals, such a system can be integrated with the current WiFi systems to provide orders of magnitude improvement in bandwidth.  A novel component of WiFO is its ability to jointly encode bits from different flows for optimal transmissions.  In this paper, we introduce the WiFO architecture and a novel cooperative transmission framework using location assisted coding (LAC) technique to increase the overall wireless capacity.  Specifically, achievable rate regions for WiFO using LAC will be characterized.  Both numerical and theoretical analyses are given to validate the proposed coding schemes.

\end{abstract}

Keyword: wireless, free space optical, capacity, achievable rate.

%
\IEEEpeerreviewmaketitle

\section{Introduction}
The number of wireless devices are projected to continue to grow significantly in the near future, fueled by the emerging markets for smart homes and the Internet of Things (IoT).
However, such an increase is anticipated to be hindered by the limited radio frequency (RF) spectrum.  Consequently, much research have been focused on utilizing the RF spectrum more effectively.

One promising approach is termed dynamic spectrum access (DSA).  Using DSA, the RF spectrum is allocated dynamically on both spatial and temporal dimensions.  For the DSA approach to work well,  many technical challenges must be overcome.  These include circuitry and algorithms for Cognitive Radio (CR) devices capable of sensing, sending, and receiving data on different RF bands.

Another notable approach uses femtocell architecture \cite{chandrasekhar2008femtocell}. Femtocell architecture has attracted enormous interest in recent years because its transmission ranges are limited within small cells, resulting in reduced interference and increased spectral efficiency.  That said, typical RF femtocells do not have large bandwidth to support a large number of users.  Alternatively, millimeter wave femtocells can be used to increase the bandwidth.  However, to achieve a large bandwidth, highly complex modulators and demodulators must be used, resulting in large energy consumption per bit.  On the other hand, recent advances in Free Space Optical (FSO) technology promise a complementary approach to increase wireless capacity with minimal changes to the existing wireless technologies.  The solid state light sources such as Lighting Emitting Diode (LED) and Vertical Cavity Surface Emitting Laser (VCSEL) are now sufficiently mature that it is possible to transmit data at high bit rates reliably with low energy consumption using simple modulation schemes such as On-Off Keying. Importantly, the FSO technologies do not interfere with the RF transmissions.  However, such high data rates are currently achievable only with point-to-point transmissions and not well integrated with existing WiFi systems.  This drawback severely limits the mobility of the free space optical wireless devices.

In \cite{wang2014channel}\cite{Liverman:16}, the authors proposed an indoor WiFi-FSO hybrid communication system called WiFO that promises to provide orders of magnitude improvement in bandwidth while maintaining the mobility of the existing WiFi systems. A video demonstration of WiFO can be seen at http:\slash \slash www.eecs.oregonstate.edu\slash $\sim$thinhq\slash WiFO.html.  WiFO aims to alleviate the bandwidth overload problem often associated with existing WiFi systems at crowded places such as airport terminals or conference venues.  WiFO modulates invisible LED light to transmit data in localized light cones to achieve high bit rate with minimal interference.
%

That said, in this paper, our contributions include:  (1) a novel channel model for short range FSO transmissions using Pulse amplitude modulation (PAM); (2) a novel cooperative transmission scheme, also known as location assisted coding (LAC) scheme that takes advantage of the receiver's location information to achieve high bit rates; (3) characterization of the multi-user achievable rate regions for the proposed channel using the proposed LAC.


\section{Related Work}
\label{sec:related}
From the FSO communication perspective, WiFO is related to several studies on FSO/RF hybrid systems. The majority of these studies, however are
in the context of outdoor point-to-point FSO transmission, using a powerful modulated laser beam.
There are also recent literature on joint optimization of simultaneous transmissions on RF and
FSO channels.   To obtain high bit rates  and spectral efficiency,  many FSO communication systems \cite{Haas:2016} use sophisticated modulation schemes such as Phase-Shift Keying (PSK) or Quadrature Phase-Shift Keying (QPSK) \cite{patnaik2013design} \cite{cvijetic200810gb} or Quadrature Amplitude Modulation (QAM)  \cite{kadhim201416} \cite{djordjevic2006multilevel} or Pulse Position Modulation (PPM)  \cite{nguyen2010coded} \cite{lee2011channel} \cite{sevincer2013lightnets}. However, these modulation schemes pay high costs in power consumption,  complexity,  and additional sensitivity to phase distortions of the received beam \cite{henniger2010introduction}.  In contrast, taking the advantage of high modulation bandwidth of recent LED/VCSEL and short-range indoor transmissions, WiFO uses simple Pulse Amplitude Modulation (PAM) \cite{Barry:2003:DCT:996027}, specifically ON-OFF Keying which results in simplicity and low power consumption.

From the coding's perspective,  the proposed LAC technique in WiFO is similar to MIMO systems that have been used widely in communication systems to improve the capacity \cite{goldsmith2003capacity}\cite{caire2003achievable}\cite{shen2007sum}.  Both LAC and MIMO techniques use several transmitters to transmit signals to achieve higher capacity. However, using multiple transmitters at the same time can also cause interference among transmissions to different receivers if they are in the same transmission range.  As such, a MIMO receiver typically receives signals from multiple transmit antennas and these signals are intended for that particular MIMO receiver at any time slot.  On the other hand,  in WiFO,  multiple transmitters transmit the joint messages simultaneously to multiple WiFO receivers, rather than a single receiver.   By taking advantage of the known interference patterns using the receiver location information, LAC technique can help the WiFO receivers to decode each message independently in presence of interference.  In a certain sense, this work is similar to the work of \cite{katti2006xors}. We note that a special case of LAC  technique was first introduced in \cite{duong2015location}. In this paper, we extend and improve the LAC  technique to obtain higher rates.

We note that our problem of characterizing the achievable region appears to be similar to the well-known broadcast channels \cite{cover1972broadcast}\cite{Marton:2006:CTD:2263338.2269512} .   Specifically, when the channel is a Degraded Broadcast Channel (DBC), the capacity region has been established \cite{cover1972broadcast}\cite{neuhoff1975coding} \cite{gallager1974capacity}. However, we can show that WiFO channel is not a degraded broadcast channel, thus  the well-known results  on DBC are not applicable \cite{technical_report_thuan_nguyen}.  Our work is also related to many research on multi-user MIMO capacity. These studies, however deal with Gaussian channels, rather than the specific proposed channel for WiFO systems.  In addition, while there have been many studies on the capacity of FSO channels \cite{shapiro2005ultimate} \cite{lapidoth2009capacity}, their focuses are mainly on modeling the underlying physics, and multi-user capacity is not considered. In contrast, we propose a simple channel model that lead to constructive coding schemes with corresponding achievable rate region for multi-user scenarios.

\section{Overview of WiFO Architecture}
\label{sec:WiFO}

\begin{figure}
  \centering
  \includegraphics[width=2.7in]{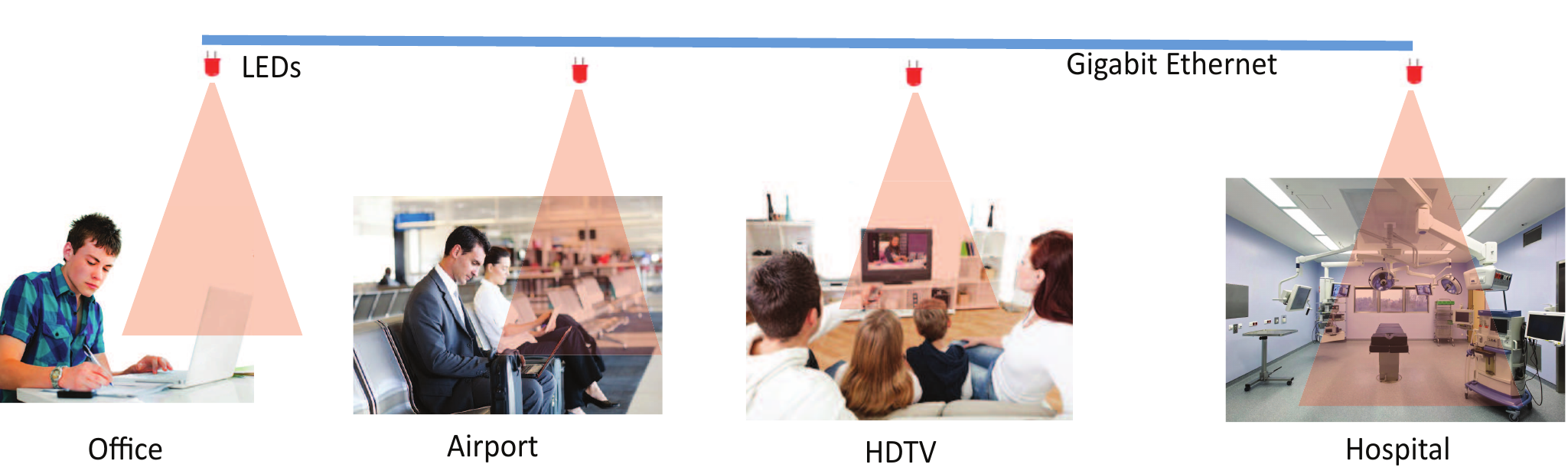}\\
  \caption{WiFO use scenarios}\label{fig:use}
  \vspace{-.2in}
\end{figure}

WiFO consists of an array of FSO transmitters to be deployed directly under the ceiling.  These FSO transmitters use inexpensive LEDs to modulate light via Pulse Amplitude Modulation (PAM). Fig.~\ref{fig:use} shows a few use cases for WiFO to boost up the wireless bandwidth.  These deployments include airport terminals, offices, entertainment centers, and automated device-device communications in critical infrastructures such as hospitals where cable deployment is costly or unsafe.

To transmit data, each FSO transmitter creates an invisible light cone about one square meter directly below in which the data can be received.  Fig.~\ref{fig:config}(a) shows a typical coverage area of WiFO using several FSO transmitters.  Digital bits ``1" and ``0" are transmitted by switching the LEDs on and off rapidly.  For the general PAM scheme, signals of more than two levels can be transmitted by varying the LED intensities.  The switching rate of the current system can be up to 100 MHz for LED-based transmitters and $>$ 1 GHz for VCSEL-based transmitters.  We note that, a number of existing FSO systems use visible light communication (VLC) which limits the modulating rate of a transmitter.  Thus, to achieve high bit rates, these systems use highly complex demodulators and modulators  (e.g. 64-QAM, OFDM), which make them less energy efficient.

Fig.~\ref{fig:config}(b) shows the light intensity as the function of the position measured from the center of the cone.  High intensity results in more reliable transmissions.

\begin{figure}
  \centering
  $\begin{array}{cc}
  \includegraphics[width=1.5in]{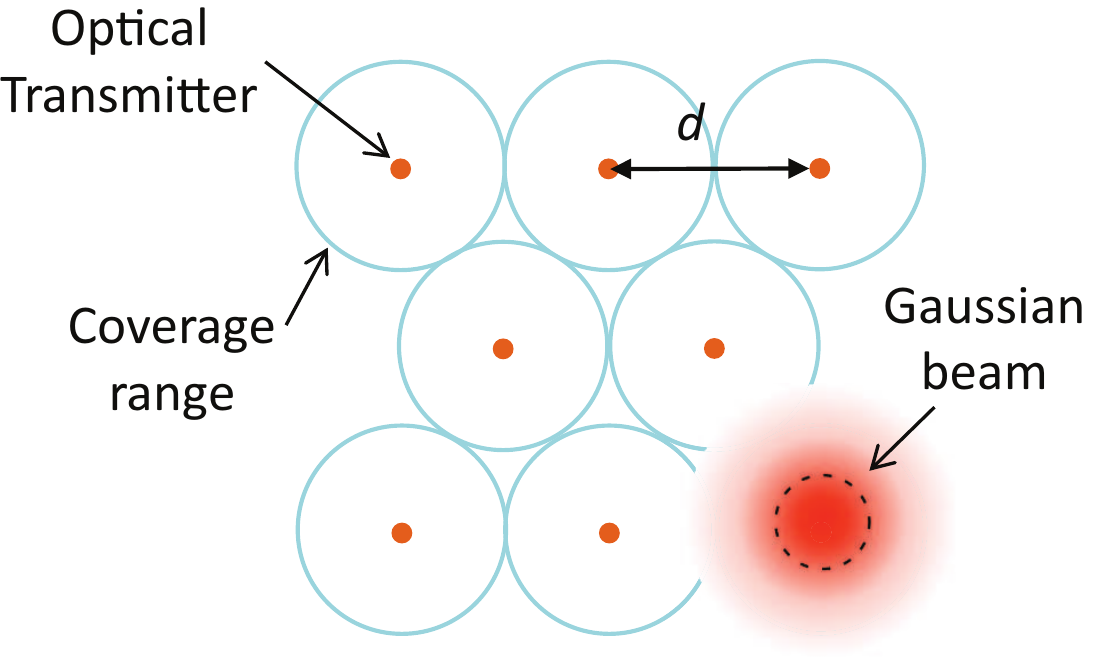} & \includegraphics[width=1.5in]{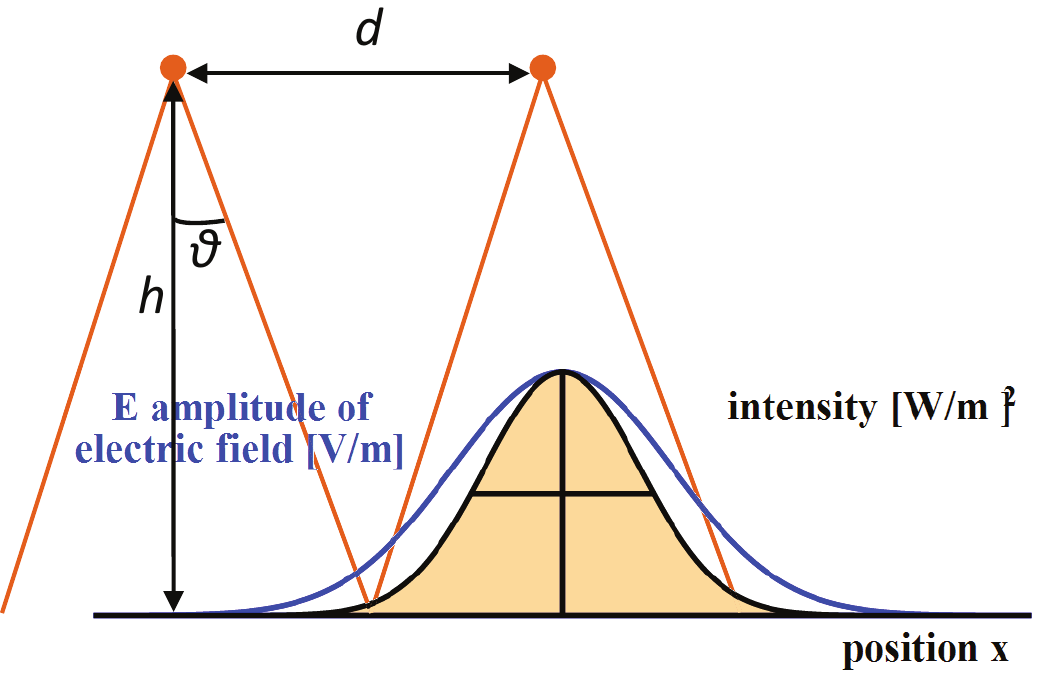} \\
  (a) &(b)
  \end{array}$
  \caption{(a) Configuration of the optical transmitter array; (b) coverage of optical transmitters with a divergent angle of $\vartheta$ }\label{fig:config}
\end{figure}

All the FSO transmitters are connected to a 100 Gbps Ethernet network which is controlled by the Access Point (AP). The AP is the brain of the WiFO system that controls the simultaneous data transmissions of each FSO transmitter and the existing WiFi channel.
At the receiving side, each WiFO receiver is equipped with a silicon pin photodiode which converts light intensity into electrical currents that can be interpreted as the digital bits ``0" and ``1".   The AP decides whether to send a packet on the WiFi or FSO channels. If it decides to send the data on the FSO channel for a particular device, the data will be encoded appropriately, and broadcast on the Ethernet network with the appropriate information to allow the right device to transmit the data. Upon receiving the data, the FSO transmitter relays the data to the intended device. Fig.~\ref{fig:data_flow} shows more detail on how data is transmitted from the Internet to the AP, then to the WiFO receiver over a FSO channel. Upon receiving the data from the FSO channel, the receiver decodes the data, and sends an ACK message to the AP via the WiFi channel.
ACK messages allow the system to adapt effectively to the current network conditions.
If the AP decides to send the data on the WiFi channel, then it just directly broadcasts the data through the usual WiFi protocol.

\begin{figure}
  \centering
  \includegraphics[width=2.5in]{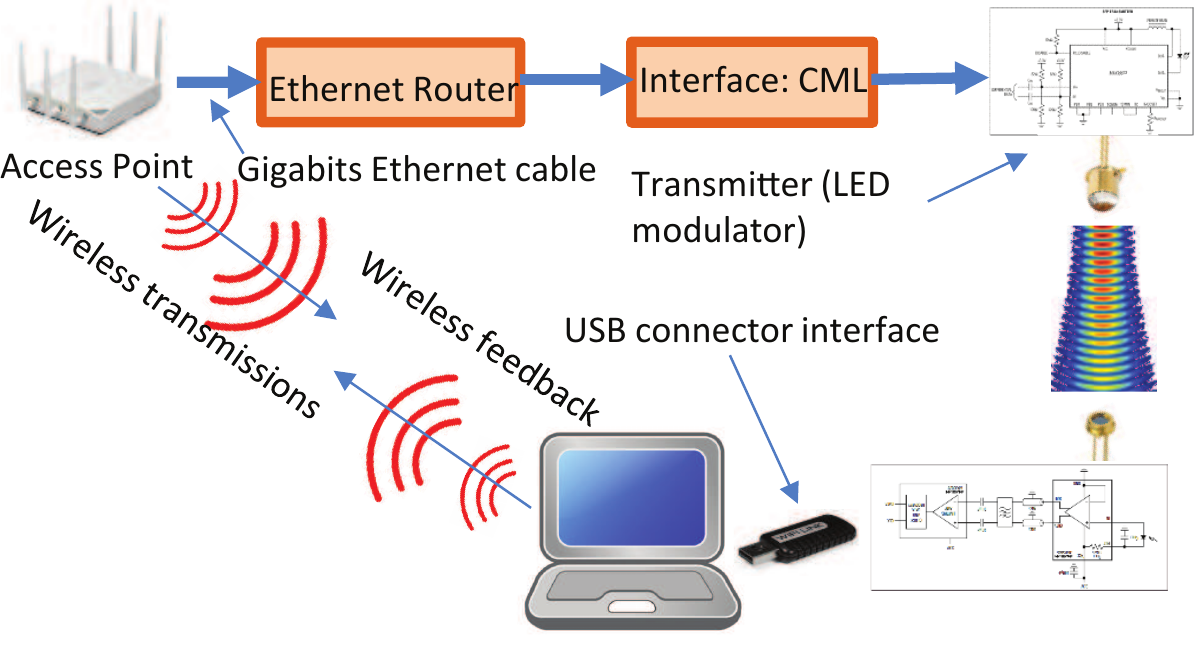}
  \caption{Data flow in WifO; Downlink connection uses both WiFi and FSO while uplink connection and ACKs use WiFi channel.}\label{fig:data_flow}
\end{figure}

As a receiver moves from one light cone to another, the AP automatically detects its location, and selects the appropriate LED to transmit the data.  The detection and selection of transmitters are performed quickly to prevent interruptions in data transmission.  Furthermore, even when the FSO transmitters are sparsely populated such that a user is not covered by any FSO transmitter,  all the data will be automatically sent via the existing WiFi channel.

One salient feature of WiFO is that, in a dense deployment scenario where light cones from LEDs are overlapped, a single receiver can associate with multiple LEDs.  As will be shown in Section \ref{sec:problem}, using cooperative transmissions from these LEDs via a novel location assisted coding (LAC) technique, a receiver in an overlapped area can receive higher bit rates.

\section{Problem Description}
\label{sec:problem}
In this section, we first provide some of the basic assumptions on the capabilities of WiFO.

\subsection{Assumption}
\label{sec:cooperative}

{\bf Location Knowledge.}
Because FSO transmitters are connected through a 100 Gbps Ethernet, the smart AP can control the transmission of individual FSO transmitter.  Furthermore, the AP knows the locations of all the receivers.  In particular, the AP knows which light cone that a receiver is currently located in.  This is accomplished through the WiFO's mobility protocol that can be described as follows.

Each FSO transmitter broadcasts a beacon signal consisting of a unique ID periodically. Based on its location, a receiver will automatically associate with one or more transmitters that provide sufficiently high SNR beacon signals.  Upon receiving the beacon signal from a transmitter, the receiver sends back {\em alive} heartbeat messages that include the essential information such as
the transmitter ID and the  MAC addresses to the AP using WiFi channel.  The AP then updates a table whose entries consist of the MAC address and the transmitter IDs which are used to forward the packets of a receiver to the appropriate transmitters.  If the AP did not receive a heartbeat from a device for some period of time, it will disassociate that device, i.e., remove its MAC address from the table.   Thus, the location information of a receiver is registered automatically at the AP.

{\bf Sparse vs. Dense Deployment.} Sparse deployment of FSO transmitters leads to less FSO coverage, but is resource efficient.  On the other hand, a dense deployment increases mobility and the bit rates for a single receiver if two or more transmitters are used to transmit data to a single receiver.  However, a dense deployment also leads to multi-user interference that might reduce the overall rate.  In this paper, we are interested in a dense deployment and show that the multi-user interference is not necessary when the side information, specifically the knowledge of receiver locations is incorporated into the proposed cooperative transmission scheme or LAC technique.

{\bf Transmitter.}  We assume that there are $n$ FSO transmitters $T_1, T_2, \dots T_n$,  each produces a light cone that overlaps each other. There are also $m$ receivers denoted as $R_1, R_2, \dots R_m$.
A FSO transmitter is assumed to use PAM for transmitting data.  However, to simplify our discussion, we will assume that a sender uses On-Off Keying (OOK) modulation where high power signal represents ``1'' and low power signal represents ``0'' \cite{henniger2010introduction}.  We note that the proposed LAC scheme can be easily extended to work with the general PAM.

{\bf Receiver.} A receiver is assumed to be able to detect different levels of light intensities. If two transmitters send a ``1'' simultaneously to a receiver, the receiver would be able to detect ``2'' as light intensities from two transmitters add constructively. On the other hand, if one transmitter sends a ``1'' while the other sends a ``0'', the receiver would receive a ``1''.

\subsection{Channel Model}
\label{sec:channel}

To assist the discussion, we start with a simple topology consisting of transmitters and two receivers shown in Fig. \ref{fig:n_2_m_2}(a). Receiver $R_2$ is in the overlapped area, and therefore can receive the signals from both transmitters while receiver $R_1$ can receive signal from only one transmitter.
\begin{figure}[h]
\begin{center}
$\begin{array}{cc}
\includegraphics[scale=0.25]{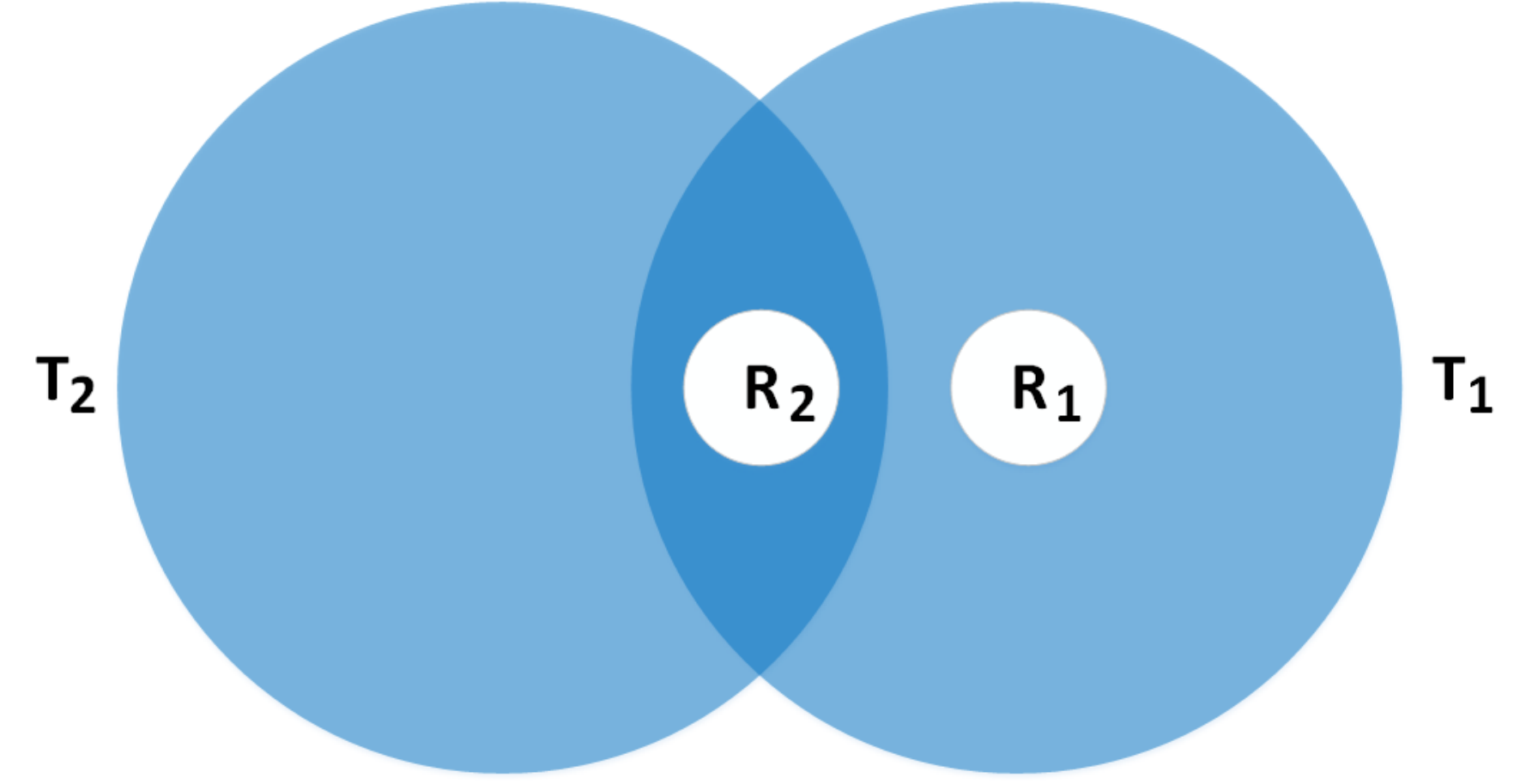} &
\includegraphics[scale=0.12]{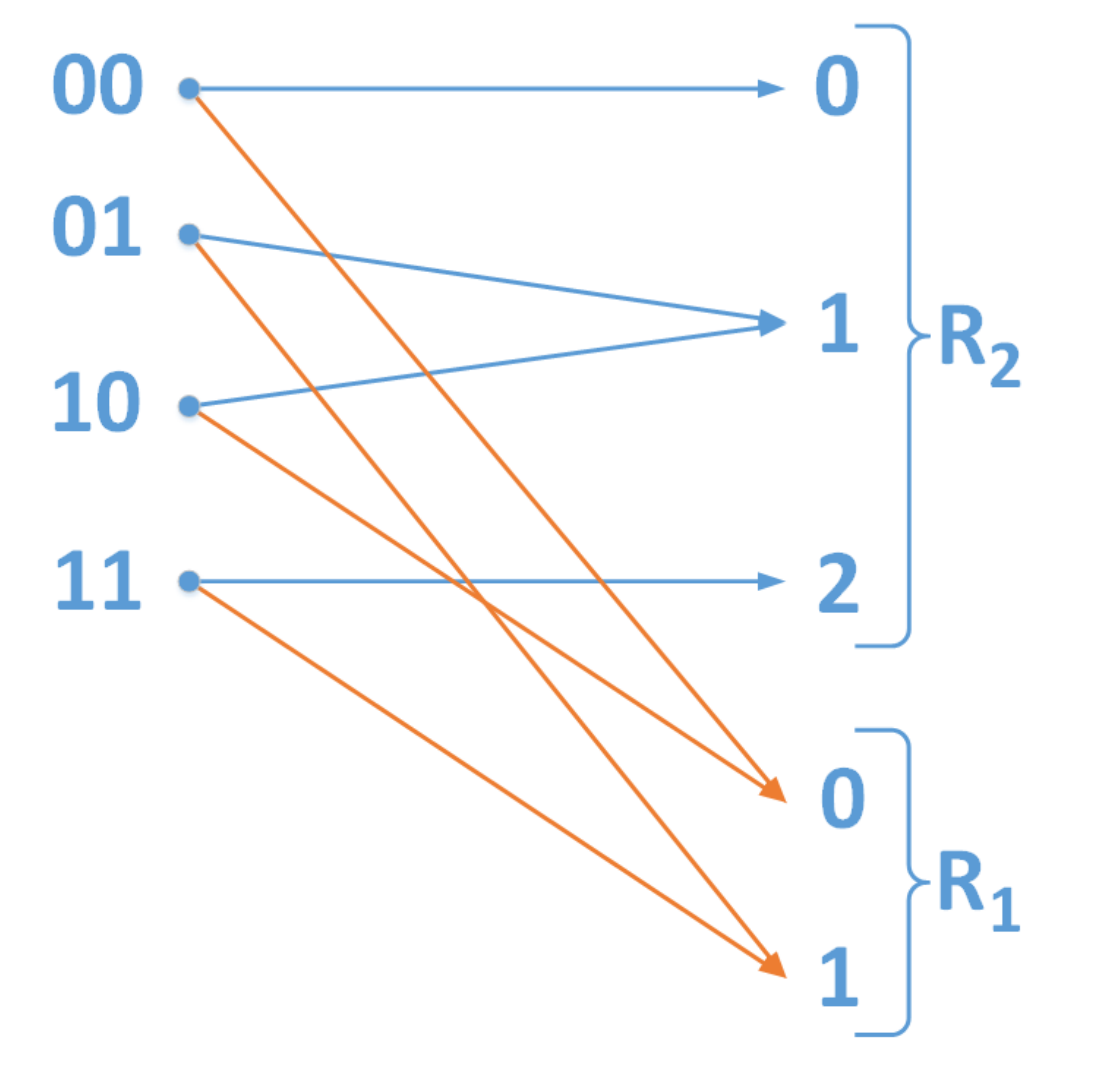} \\
(a) & (b)
\end{array}$
\end{center}
\caption{(a) Topology for two transmitters and two receivers; (b) Broadcast channels for two receivers.}
\label{fig:n_2_m_2}
\end{figure}
Cooperative transmission scheme uses both transmitters to send independent information to each receiver simultaneously.  This cooperative transmission scheme can be viewed as a broadcast channel in which the sender can broadcast four possible symbols: ``00", ``01", ``10", and ``11" with the left and right bits are transmitted by different transmitters.  Thus, there is a different channel associated with each receiver.  Fig. \ref{fig:n_2_m_2}(b) shows the broadcast channel for the two receivers $R_1$ and $R_2$.   There are only three possible symbols for  $R_2$ because it is located in the overlapped coverage of two transmitters.  Therefore, it cannot differentiate the transmitted patterns ``01" and ``10"  as both transmitted patterns result in a ``1" at $R_2$  due to the additive interference.  On the other hand, there are only two symbols at receiver $R_1$ because it is located in the coverage of a single transmitter.

Similarly, Fig. \ref{fig:n_3_m_2}(a) shows a topology with three transmitters and two receivers and Fig. \ref{fig:n_3_m_2}(b) shows the corresponding broadcast channels.

\begin{figure}[h]
\begin{center}
$\begin{array}{cc}
\includegraphics[scale=0.12]{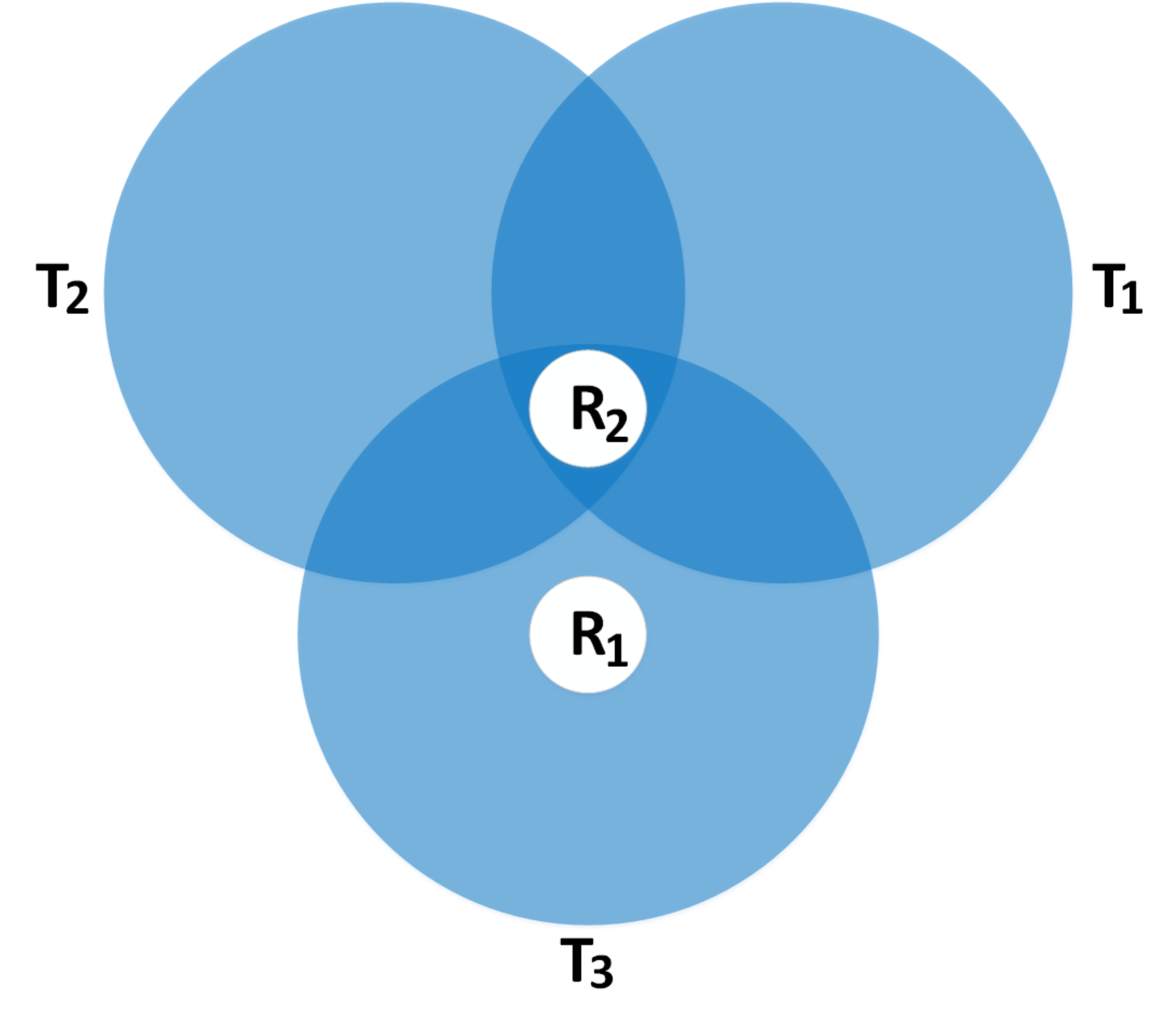} &
\includegraphics[scale=0.12]{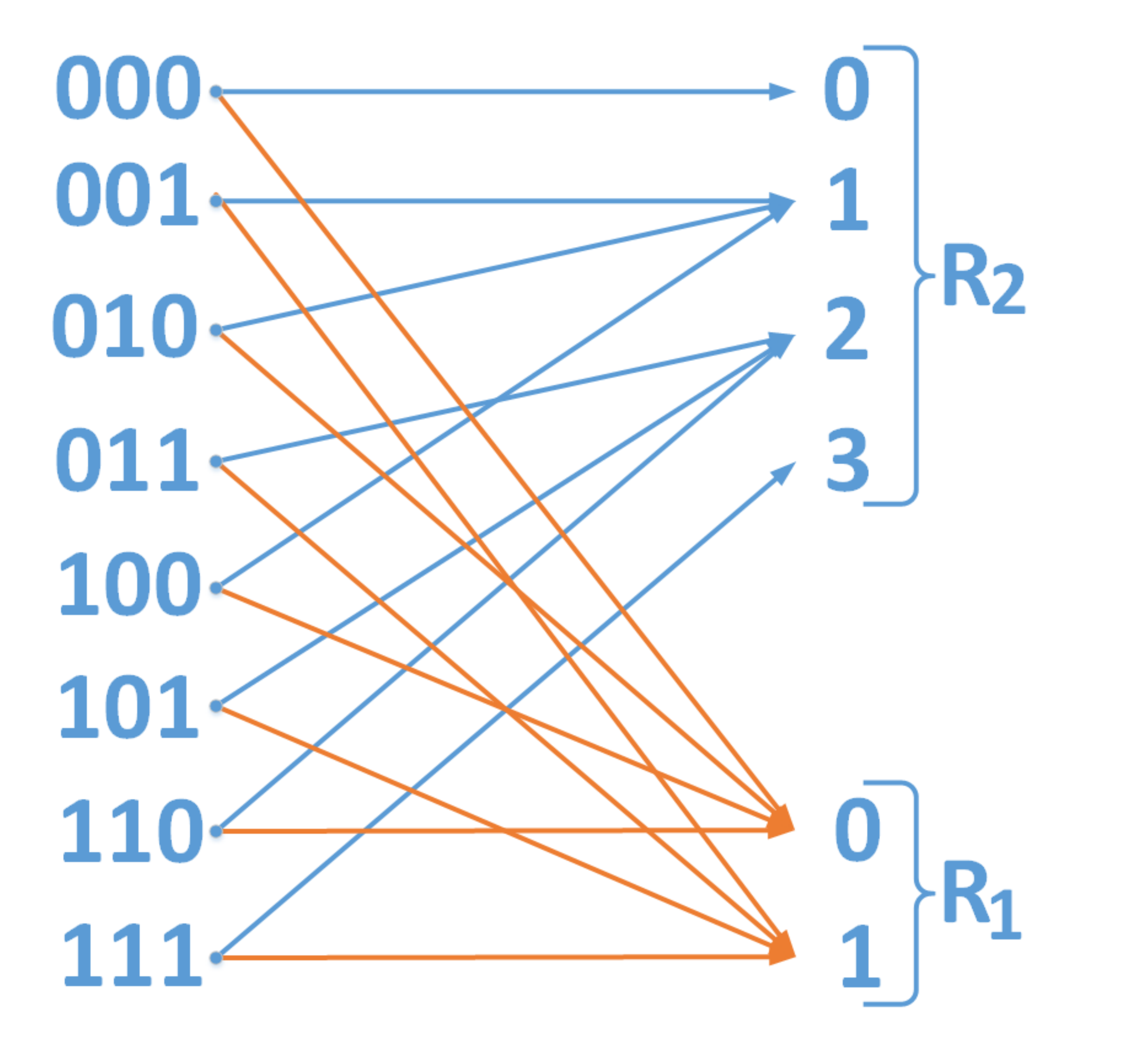}\\
(a) & (b)
\end{array}$
\end{center}
\caption{(a) Topology for three FSO transmitters and two receivers; (b) Broadcast channels for two receivers.}
\label{fig:n_3_m_2}
\end{figure}

Assuming that there is no transmission errors, then it is straightforward to see that the channel matrices for $R_1$ and $R_2$ associated with Fig. \ref{fig:n_2_m_2}(b) are:
\\
$\begin{array}{cc}
A_1 = \begin{bmatrix}
1 & 0  \\
0 & 1 \\
1 & 0 \\
0 & 1 \\
\end{bmatrix}, &
A_2 = \begin{bmatrix}
1 & 0 & 0 \\
0 & 1 & 0\\
0 & 1 & 0\\
0 & 0 & 1\\
\end{bmatrix}
\end{array}. $\\

We note that the entry $A(i,j)$ of the channel matrix denotes probability that a transmitted symbol $i$ to turn a symbol $j$ at the receiver.  Since we assume all sources of error are due to multi-user interference,
$A(i,j)$ is either 0 or 1.

Similarly, the channel matrices for $R_1$ and $R_3$ associated with Fig. \ref{fig:n_3_m_2}(b) are:\\

$\begin{array}{cc}
A_1 = \begin{bmatrix}
1 & 0  \\
0 & 1 \\
1 & 0 \\
0 & 1 \\
1 & 0  \\
0 & 1 \\
1 & 0 \\
0 & 1 \\
\end{bmatrix} , &
A_3 = \begin{bmatrix}
1 & 0 & 0 & 0 \\
0 & 1 & 0 & 0 \\
0 & 1 & 0 & 0 \\
0 & 0 & 1 & 0 \\
0 & 1 & 0 & 0 \\
0 & 0 & 1 & 0 \\
0 & 0 & 1 & 0 \\
0 & 0 & 0 & 1 \\
\end{bmatrix}
\end{array}.$\\

The same method  can be used to construct the channel matrices for arbitrary configurations/topologies with  different numbers of transmitters and receivers.
For clarity, in this paper, we only discuss the coding techniques and achievable capacity region for ideal channels with no errors.  However, the proposed techniques can be readily extended to channel with errors by
constructing a different channel matrix.

\subsection{Achievable Rate Region}
Achievable rate region characterizes  the rates at which each receiver can receive their independent information simultaneously.
{\em Our goal is to determine a cooperative transmission scheme among the transmitters in order to enlarge the achievable rate region for the receivers.}

To discuss the achievable rate region,  we use an example given by the topology shown in  Fig. \ref{fig:n_2_m_2}(a).  We assume that the transmitters $T_1$ and $T_2$ are responsible for transmitting the independent information to its receivers $R_1$ and $R_2$ respectively.  Suppose $R_1$ and $R_2$ want to receive bits "1" and "0", respectively.   If  $T_1$ and $T_2$ can naively transmit bit "1" and "0", respectively, then $R_1$ will correctly receive its bit "1".   On the other hand,  since $R_2$ is located in the overlapped coverage of the two transmitters, it will incorrectly receive bit ``1" due to the additive multi-user interference.  To resolve the multi-user interference,  a TDMA scheme can be employed in which each  transmitter can take turn to transmit a  bit to its receiver in each time slot.   As a result, using the naive scheme coupled with TDMA, on average  each receiver can receive 0.5 bit per time slot.  Another scheme would be just to transmit bits to either $R_1$ or $R_2$ exclusively.  This implies that one receivers will have 1 bit per time slot while the other zero bit per time slot.  Thus, let ($x$, $y$) denote the achievable rate tuple where $x$  and $y$ denote the average of $R_1$ and $R_2$, then achievable rate region would include the rate tuples:  (1,0), (0,1), (0.5,0.5).  In general, a time-sharing strategy that uses  the scheme (1,0) for $\lambda$ fraction of the time, and the scheme (0,1) for $1- \lambda$ of the time  produces a rate region shown in Fig. \ref{fig:simple_rate_region_2_2}.  In Section \ref{sec:LAC}, we will show that  such a scheme produces a suboptimal (small) rate region, and describe how the LAC technique can be used to enlarge the achievable rate region.
\begin{figure}
  \centering
  \includegraphics[width=2in]{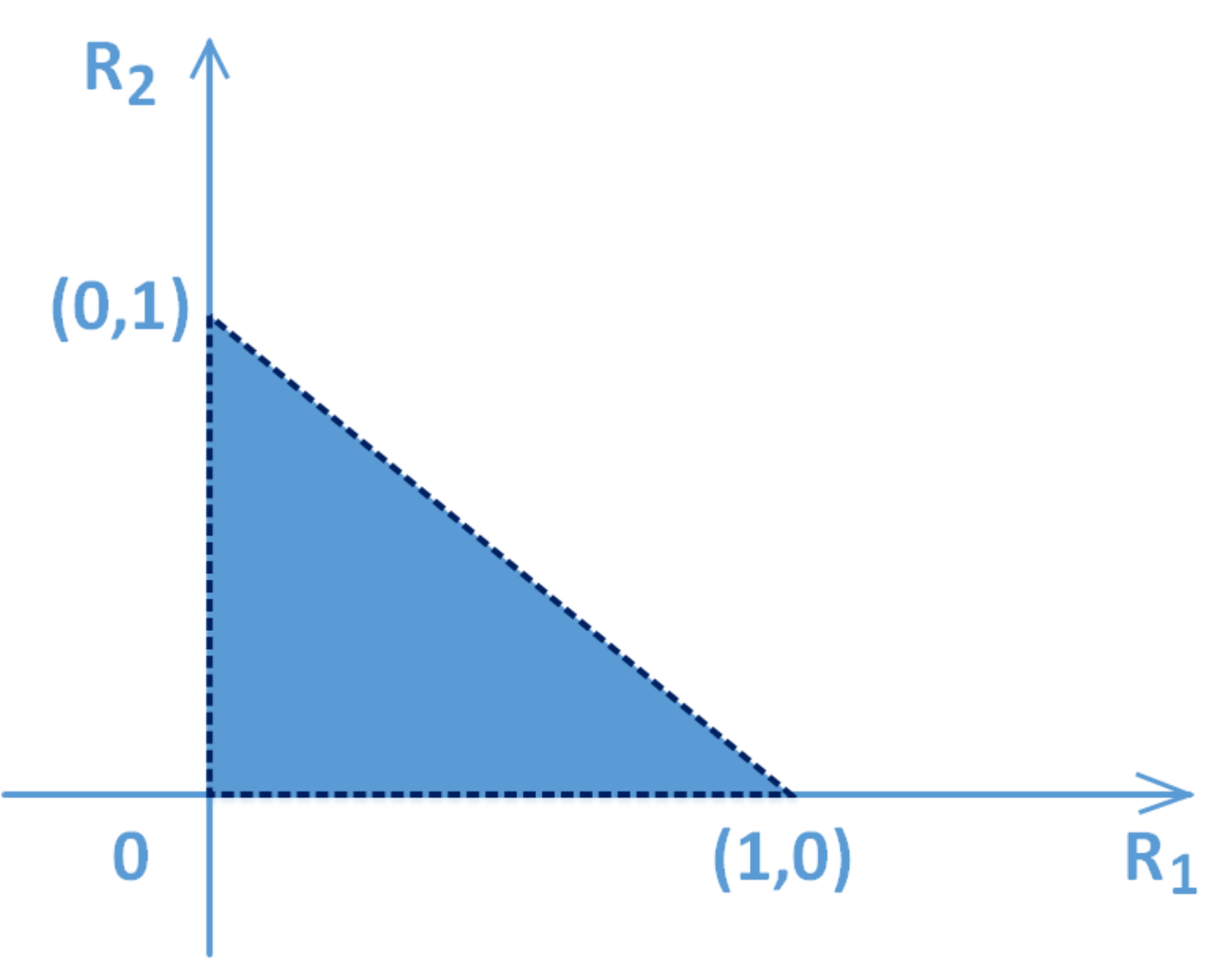}
  \caption{Achievable rate region using time-sharing strategy between two tuples (0,1) and (1,0)}\label{fig:simple_rate_region_2_2}
  \vspace{-.2in}
\end{figure}

Also, we note that the proposed cooperative transmission scheme/coding technique can be extended to handle the channels with external errors.

\section{Cooperative Transmission via Location Assisted Coding (LAC)}
\label{sec:LAC}
LAC is a cooperative transmission scheme that uses the receiver's location information to enlarge the achievable rate region.    For a given topology, LAC employs different coding schemes: single rate coding (SRC), equal rate coding (ERC),  and joint rate coding (JRC).  Each scheme finds a different feasible rate tuple.  Next, by varying the fractions of the time that LAC uses these different coding schemes,  the achievable rate region can be achieved as the convex hull of these rate tuples.

\subsection{Single Rate Coding}
\label{sec:single}
Using SRC, a receiver in the coverage of $n$ transmitters, can receive high bit rate by using all $n$ transmitters to transmit the information for that particular receiver. As a result, other receivers even though located in the coverage of some of these $n$ transmitters, will not receive any information.  We have the following results on the achievable rate of the single receiver.

\begin{proposition} (Single Rate Coding)
\label{prop:SRC}
For a receiver in the light cone of $n$ transmitters, the achievable rate is $\log{(n + 1)}$ bits per time slot.
\end{proposition}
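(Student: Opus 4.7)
The plan is to prove both directions, achievability and converse, by exploiting the fact that the channel from the $n$ cooperating transmitters to the single receiver is, under the no-error assumption of Section \ref{sec:channel}, a deterministic many-to-one map. Specifically, if $(x_1,\dots,x_n) \in \{0,1\}^n$ is the joint OOK transmission, the receiver's model from Section \ref{sec:problem} says it detects the additive intensity $y = \sum_{i=1}^{n} x_i$, so the output alphabet is $\mathcal{Y}=\{0,1,\dots,n\}$ and has cardinality $n+1$.

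For achievability, I would first construct an explicit length-one code. Pick any injective map $\phi:\{0,1,\dots,n\}\to\{0,1\}^n$ such that $\phi(k)$ is any binary $n$-tuple of Hamming weight $k$ (for instance $\phi(k)=(\underbrace{1,\dots,1}_{k},0,\dots,0)$). If the AP wishes to communicate a symbol $k\in\{0,1,\dots,n\}$ to the receiver, it instructs transmitter $T_i$ to send the $i$-th coordinate of $\phi(k)$. The receiver observes $y = \sum_i \phi(k)_i = k$ and decodes without error. Since one of $n+1$ messages is transmitted per slot, this gives $\log_2(n+1)$ bits per channel use when $n+1$ is a power of $2$. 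For general $n$, one uses this single-letter scheme over $N$ channel uses together with a standard fixed-to-variable or enumerative source code to map $\lfloor N\log_2(n+1)\rfloor$ information bits into a length-$N$ sequence over $\mathcal{Y}$, so the rate approaches $\log_2(n+1)$ bits per slot as $N\to\infty$.

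For the converse, I would invoke the standard capacity upper bound for a discrete channel: for any input distribution on $\{0,1\}^n$,
\begin{equation*}
I(X_1,\dots,X_n;Y) \;\le\; H(Y) \;\le\; \log_2|\mathcal{Y}| \;=\; \log_2(n+1),
\end{equation*}
where the first inequality uses $H(Y|X_1,\dots,X_n)=0$ (the channel is deterministic) and the second uses the uniform bound on entropy over a finite alphabet. Combined with the achievability construction above, which is uniform on $\mathcal{Y}$ and therefore meets the bound with equality, this establishes that $\log_2(n+1)$ is both achievable and optimal.

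The main obstacle, as far as I can see, is purely cosmetic rather than technical: one must justify that the rate $\log_2(n+1)$ (which is irrational for most $n$) is genuinely achievable per slot in the limit, not just when $n+1$ is a power of two. I would handle this with the block-coding argument above, noting that the paper's setting of "bits per time slot" is inherently an asymptotic, average notion, matching the standard information-theoretic convention.
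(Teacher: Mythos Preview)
Your proof is correct and follows essentially the same approach as the paper's: both identify the deterministic additive channel with output alphabet $\{0,1,\dots,n\}$ and invoke the uniform-distribution/entropy argument to obtain $\log(n+1)$. Your version is more explicit (you write down an encoding map and address the block-coding issue when $n+1$ is not a power of two), whereas the paper simply asserts that $n+1$ distinguishable levels plus the standard capacity result for an error-free channel give $\log(n+1)$ under the uniform input pmf.
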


\begin{proof}
Since each transmitter is capable of transmitting ``0" or ``1" only, and the single receiver receives the sum of all the signals from the $n$ transmitters, then there is total of $n+1$ distinct levels perceived at the receiver.    Furthermore, since there is no error involved, the probability mass function of the transmitted symbols is identical of the probability mass function of the received symbols. Thus, from basic result of information theory \cite{Cover:2006:EIT:1146355}, the capacity for the single user is achieved using the uniform probability mass function which results in $\log{(n+1)}$ bits per time slot.  Note that the rates of other receivers is zero.
\end{proof}

\subsection{Equal Rate Coding}
\label{sec:proportional}
In SRC, one receiver receives high bit rate while rates for other receivers are zero.   On the other hand, using ERC , for certain topologies,  each receiver to obtain one independent bit per time slot.   Let  $H$ to be the topological matrix whose entry $H(i,j)$ is equal to $1$ if receiver $i$ can receive signal from transmitter $j$ and $0$ otherwise. For example, the topological matrix associated with Fig. \ref{fig:n_2_m_2}(a) is:
$$ H = \begin{bmatrix}
       1 & 0  \\
       1 & 1
     \end{bmatrix}.$$

Assume $H$ is full rank, and for simplicity, the number of receivers equal the number of transmitters, then the proposition for ERC are as follows.

\begin{proposition} (Equal Rate Coding \cite{duong2015location})
\label{prop:ERC}
If the topological matrix $H$ is full-rank, then using ERC, every receiver can receive 1 bit per time slot.  Furthermore, if $H$ is an $n \times n$ full rank matrix, then maximum sum of all receiver rates  is $n$ bits per time slot.
\end{proposition}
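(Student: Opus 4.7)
The plan is to split the argument into an achievability result and a matching converse. For achievability, I would construct an explicit encoder and decoder that delivers one bit per slot to each receiver; for the converse, I would bound the total information that the $n$ transmitters can inject into the channel in each slot.

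For the achievable direction, let $b = (b_1, \ldots, b_n)^T \in \{0,1\}^n$ collect the bits intended for receivers $R_1, \ldots, R_n$. Receiver $i$ observes the integer sum $y_i = \sum_j H(i,j)\, x_j$, where $x \in \{0,1\}^n$ is the vector of transmitted bits. I would choose $x$ so that $H x \equiv b \pmod{2}$; since $H$ is invertible (interpreted over $GF(2)$) this has the unique solution $x = H^{-1} b \bmod 2$. Transmitter $j$ then emits $x_j$, and receiver $i$ decodes its bit simply by taking the parity $b_i = y_i \bmod 2$, which is correct by construction even though the integer $y_i$ may exceed $1$. As a sanity check, for the $2 \times 2$ topology of Fig.~\ref{fig:n_2_m_2}(a) the encoder reduces to $x_1 = b_1$, $x_2 = b_1 \oplus b_2$; receiver $R_2$ observes $y_2 \in \{0,1,2\}$ and recovers $b_2$ from $y_2 \bmod 2$ regardless of the value of $b_1$.

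For the converse, I would note that each of the $n$ transmitters uses OOK and therefore emits at most one bit per time slot, so the joint transmission vector $x \in \{0,1\}^n$ has entropy at most $n$ bits. Because the received signal at every user is a deterministic function of $x$, the data processing inequality implies that the mutual information between $x$ and any tuple of decoded receiver messages is bounded by $n$ bits per slot. Hence the sum of the receiver rates cannot exceed $n$ regardless of the coding strategy, matching the achievable lower bound obtained in the first part.

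The main subtlety that needs care is the interpretation of ``full rank'' for the binary matrix $H$: the linear encoder above actually requires $H$ to be invertible over $GF(2)$, which is the natural reading here and holds in all topologies considered in the paper. If instead full rank were meant over the reals and $H$ happened to be singular modulo $2$, a minor augmentation such as time-sharing with a different $GF(2)$-invertible sub-configuration would be needed. Apart from that caveat the achievability is a one-shot linear code followed by parity extraction, and the converse is a single counting argument, so I do not expect any serious obstacle in either direction.
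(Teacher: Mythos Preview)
Your proposal is correct and follows essentially the same approach as the paper: the achievability part is identical (encode via $x = H^{-1}b$ over $\mathbf{GF}(2)$ and decode by taking parity of the received integer), and your converse is the same counting argument the paper uses, only phrased a bit more formally via entropy and the data processing inequality. Your caveat about reading ``full rank'' over $\mathbf{GF}(2)$ is exactly the assumption the paper makes explicit in its proof.
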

\begin{proof}

We will show explicitly the encoding and decoding procedures to obtain 1 bit per time slot for each receiver using ERC.\\
\textbf{Encoding}: Let $b = (b_1, b_2, \dots, b_n)^T$ denote the information bits intended to be sent to receiver $R_1, R_2, \dots, R_n$, respectively.  $x = (x_1, x_2, \dots, x_n)^T$ be the coded bits transmitted by the transmitter $T_1, T_2, \dots, T_n$, respectively, and $y = (y_1, y_2, \dots, y_n)^T$ be the signal received at the receiver $R_i$.  The goal of the encoding scheme   $x = \mathcal{C}(b)$,  is to produce the bits $x_i$'s such that every receiver $R_i$, upon receiving $y_i$, can recover its $b_i$.

We consider the following system of linear equations:
\begin{small}
\begin{equation}
\label{eq:encoding}
\begin{cases}
\begin{array}{r@{}l}
H(1,1) x_1 \oplus H(1,2) x_2 \oplus \ldots \oplus H(1,n) x_n &{}= b_1 \\
H(2,1) x_1 \oplus H(2,2) x_2 \oplus \ldots \oplus H(2,n) x_n &{}= b_2 \\
&{}\ldots  \\
H(n,1) x_1 \oplus H(n,2) x_2 \oplus \ldots \oplus H(n,n) x_n &{}= b_n
\end{array}
\end{cases}
\end{equation}
\end{small}
where $\oplus$ is addition in $\mathbf{GF}(2)$, i.e. $a \oplus b = (a+b)\mod2$.   Since $H$ is full-rank  in $\mathbf{GF}(2)$, we can solve the system of equations \eqref{eq:encoding} above for unique $x_1$, $x_2$, $\ldots$, $x_n$ in terms of $b_1$, $b_2$, $\ldots$, $b_n$.  Mathematically, the encoding is:

\begin{equation}
x = H^{-1}b,
\end{equation}
where all the computations are done in finite field $\mathbf{GF}(2)$.
Each transmitter $T_i$ then transmits $x_i$'s  to the receivers.

\textbf{Decoding}: A receiver $R_i$ needs to be able to recover the bit $b_i$ from the received signal $y_i$ which can be represented as:

\begin{small}
\begin{equation}
\label{eq:receive}
\begin{cases}
\begin{array}{r@{}l}
    y_1 &{}= H(1,1) x_1 + H(1,2) x_2 + \ldots + H(1,n) x_n\\
    y_2 &{}= H(2,1) x_1 + H(2,2) x_2 + \ldots + H(2,n) x_n\\
    &{} \ldots \\
    y_n &{}= H(n,1) x_1 + H(n,2) x_2 + \ldots + H(n,n) x_n
\end{array}
\end{cases}
\end{equation}
\end{small}

Note that the addition $+$ in \ref{eq:receive} is ordinary addition operation.

Now upon receiveing $y_i$'s, the receiver $R_i$ recovers $b_i$ by performing
\begin{equation}
\label{eq:recovered}
y_i \mod 2 = \hat{b}_i.
\end{equation}

It is easy to check that $b_i = \hat{b}_i$. This can be seen by performing $\mod 2$ operations on both sides of equations \eqref{eq:receive} which results in the equations \eqref{eq:encoding}.
Or simply, if $y_i$ is even then $R_i$ decodes bit $b_i$ as  ``0", and  ``1" otherwise.  As a result, each
receiver can decode its bits correctly and independently in presence of interference.

The second statement of the proof is straightforward.  We note that the sum rate is upper bounded by the maximum number of independent bits that can be sent out simulantanously.  Since there are $n$  transmitters, there are at most $n$ bits can be sent out simultaneously.  Since we showed that the for a full rank $H$, each receiver can receive 1 bit per time slot, and therefore when $H$ is  a $n \times n$ full rank matrix, the total rate is $n$ bits per time slot.

\end{proof}

Proposition \ref{prop:ERC} establishes the sufficient conditions regarding the topology that allows for (1) independent information to be sent at equal rates to all the receivers and (2) achieving maximum sum rate.

\subsection{Joint Rate Coding}
\label{sec:equal}

Unlike ERC, Joint Rate Coding (JRC) technique allows the receivers to obtain different rates.  JRC is a bit more involved.  To aid the discussion,  we employ the following definitions and notations.

%
%
%

\theoremstyle{definition}
\begin{definition}
\label{def1}
{(Exclusive and Shared Transmitters)}
Let $\mathcal{R} = \{1,2, \dots, m\}$ be the set of $m$ receivers.
Let $\mathcal{S}  \subset \mathcal{R}$,  and $\mathcal{T_S}$ denotes a group of transmitters that cover  exactly all the receivers in $S$.
Each transmitter in $\mathcal{T}_\mathcal{S}$ is called an exclusive transmitter if $\mathcal{S}$ is a singleton, and  a shared transmitter if $S$ has two or more elements.
Let $t_\mathcal{S} = |\mathcal{T}_\mathcal{S}|$ denote the number of transmitters that covers exactly all the receivers in $\mathcal{S}$. To simplify the notations,
for exclusive transmitters, we use $t_i$ to denote the number of transmitters that covers the receiver $R_i$ exclusively while $t_{ij}$ denotes the number of pairwise sharing transmitters that cover only two receivers $R_i$ and $R_j$ and no other receivers.
\end{definition}

For example,  in Fig. \ref{fig:n_2_m_2}(a), the transmitter $T_2$ is the only exclusive transmitter for $R_2$, and so $t_2 = 1$. On the other hand, $t_1$ = 0 since there is no exclusive transmitter for $R_1$.  However, $T_1$ is a shared transmitter between $R_1$ and $R_2$, so $t_{12}$ = 1.  Similarly, in Fig. \ref{fig:n_3_m_2} (a), $t_1 = 0$, $t_2 = 2$, and $t_{12} = 1$.

The key to the JRC technique is how to use the shared transmitters to transmit bits to multiple receivers simultaneously.  At the fundamental level, we develop JRC technique  for topologies that consist only exclusive and pairwise sharing transmitters.  Fig. \ref{fig:n_2_m_2}(a) and  \ref{fig:n_3_m_2} (a) show such topologies.  We then show how to decompose a general topologies into the several pairwise sharing topologies, then the fundamental techniques for pairwise can be applied. That said, we will first consider a two receivers $R_1$ and $R_2$ with $t_1$ and $t_2$ exclusive transmitters and $t_{12}$ shared transmitters.

JRC allocates different rates to the receivers $R_1$ and $R_2$ through two parameters, which can be viewed as the number of shared transmitters allocated to $R_1$ and $R_2$.  In particular, we denote
$t^1_{12}$  and $t^2_{12}$ as the number of shared transmitters allocated to $R_1$ and $R_2$, respectively.  We have:
 \begin{equation}
 \label{19}
 t^1_{12} +  t^2_{12} \leq t_{12}.
 \end{equation}

We will show using JRC, by increasing $t^1_{12}$, we allow $R_1$ to achieve higher rate at the expense of a reduced rate for $R_2$.   Fig. \ref{fig:jrc} illustrates our notations. Based on this, we have the following proposition on the achievable rates using JRC for two receivers.
\begin{figure}
  \centering
  \includegraphics[width=2.5in]{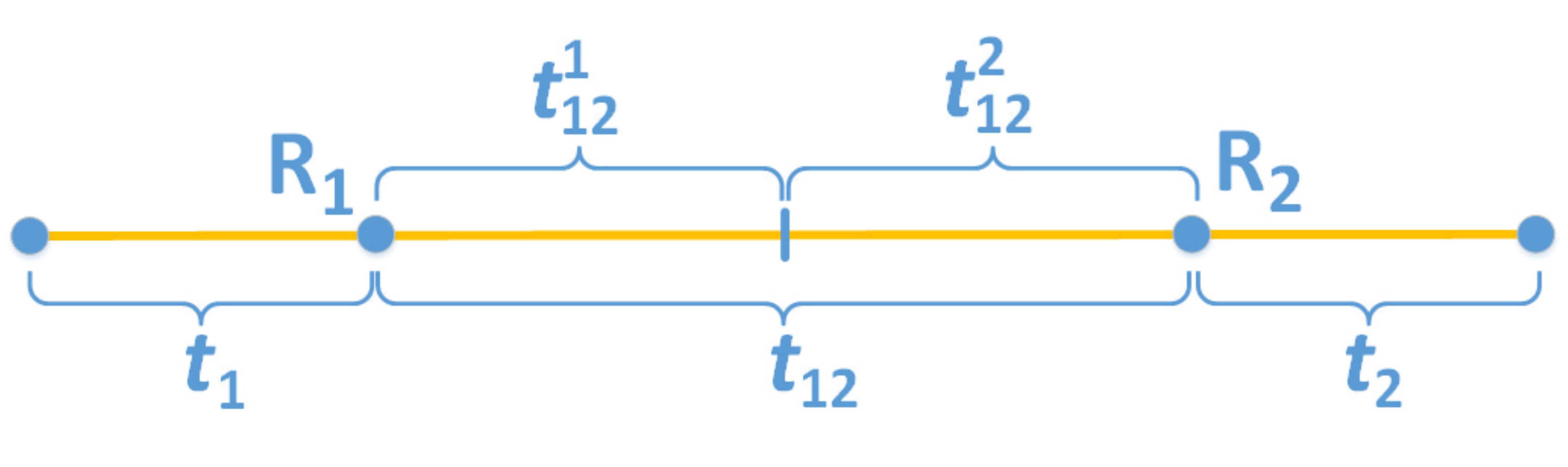}
  \caption{$t_1$ and $t_{2}$  are number of exclusive transmitters for $R_1$ and $R_2$ while $t_{12}=t_{21}$ is the number of transmitters that covers both $R_1$ and $R_2$; $t^{1}_{12}$ can be distributed to $R_1$ and $t^{2}_{12}$ can be distributed to $R_2$ to adjust the rates of $R_1$ and $R_2$.}\label{fig:jrc}
  \vspace{-.2in}
\end{figure}

\begin{proposition} {(Achievable rates for two-receiver topology)}.
\label{prop:JRC}
If  $t_1 \geq t^2_{12}$ and  $t_2 \geq t^1_{12}$  then $R_1$ and $R_2$ can achieve the rates of $\log{c_1}=\log{(t_{1} + t^1_{12} + 1)}$ and $\log{c_2}=\log{(t_{2} + t^2_{21} + 1)}$ bits per time slot, respectively, where $t^1_{12} + t^2_{12} \leq t_{12}$.  $t^1_{12}$ and  $t^2_{12}$ are parameters that control the rates between $R_1$ and $R_2$.
\end{proposition}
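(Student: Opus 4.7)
The plan is to give an explicit encoder/decoder pair in the constructive style of Propositions~1 and 2. Following Definition~1 I will partition the transmitters into four groups and let $a,b,c,d$ denote the sums of on-bits in the $t_1$ exclusive-to-$R_1$, $t_2$ exclusive-to-$R_2$, $t^{1}_{12}$ shared-allocated-to-$R_1$, and $t^{2}_{12}$ shared-allocated-to-$R_2$ groups. Under the no-error assumption, $R_1$ observes $Y_1=a+c+d$ and $R_2$ observes $Y_2=b+c+d$. The decoder at $R_i$ will be the modular reduction $\hat X_i = Y_i \bmod c_i$, so proving achievability of $\log c_i$ bits per slot at each receiver reduces to a combinatorial claim that the encoder can realize every residue pair.

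The heart of the argument is then the following realizability lemma: for every message pair $(X_1,X_2)\in\{0,\dots,c_1-1\}\times\{0,\dots,c_2-1\}$ there exist integers $Y_1\in[0,t_1+t_{12}]$ and $Y_2\in[0,t_2+t_{12}]$ with $Y_i\equiv X_i\pmod{c_i}$ and a legal quadruple $(a,b,c,d)$ satisfying $Y_1=a+c+d$ and $Y_2=b+c+d$. My first move will be the direct realization $Y_1=X_1$, $Y_2=X_2$: set $a=(X_1-X_2)^{+}$, $b=(X_2-X_1)^{+}$, and $c+d=\min(X_1,X_2)$, then split the carry between $c$ and $d$ respecting their caps $t^{1}_{12}$ and $t^{2}_{12}$. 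This direct attempt succeeds whenever $|X_1-X_2|\leq\max(t_1,t_2)$ and $\min(X_1,X_2)\leq t^{1}_{12}+t^{2}_{12}$.

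The remaining boundary cases are handled by \emph{wrapping}: when $X_1-X_2>t_1$, replace $Y_2$ by $X_2+c_2$, and symmetrically when $X_2-X_1>t_2$, replace $Y_1$ by $X_1+c_1$. A short bounding argument shows that the hypothesis $t_1\geq t^{2}_{12}$ (resp.\ $t_2\geq t^{1}_{12}$) is exactly what is needed to keep the post-wrap gap $Y_1-Y_2$ inside $[-t_2,t_1]$ and the post-wrap carry $c+d$ inside $[0,t^{1}_{12}+t^{2}_{12}]$, so that the wrapped pair is again realizable with a legal $(a,b,c,d)$.

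The main obstacle will be the bookkeeping in this wrap-around regime: at each corner of the message grid one must simultaneously verify inequalities on $a$, $b$, $c+d$, and $Y_i$, with the two hypotheses each invoked in the right direction. Once realizability is established in all cases, each of the $c_1 c_2$ message pairs maps to a distinct realizable joint transmission, the receivers independently recover $X_i$ via $Y_i\bmod c_i$, and the simultaneous rates $\log c_1$ and $\log c_2$ bits per time slot follow.
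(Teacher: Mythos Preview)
Your decoder $Y_i\bmod c_i$ is exactly the paper's, and your reduction to ``realize every residue pair $(X_1,X_2)$ by a legal triple of group-sums'' is the same core idea.  The paper, however, organizes the realizability argument differently.  It does \emph{not} split the shared transmitters into two subgroups; only the total shared sum $x_{12}$ matters (indeed in your setup only $c+d$ ever appears), so your separate caps on $c$ and $d$ are superfluous.  Rather than a direct-versus-wrap case analysis, the paper writes, for each receiver, the full set $\{b_i-j\bmod c_i: 0\le j\le t_i\}$ of shared-sum values compatible with message $b_i$, proves by an inclusion--exclusion count that these two sets (each of size $t_i+1$, both contained in $\{0,\dots,\max(c_1,c_2)-1\}$) must intersect under the hypotheses $t_1\ge t^2_{12}$ and $t_2\ge t^1_{12}$, and then takes $x_{12}^*$ to be the minimum of the intersection, showing by a short contradiction that this minimum never exceeds $t_{12}$.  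This covers all message pairs uniformly, whereas your approach buys explicitness at the cost of more corner-case bookkeeping.

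Two corrections to your plan.  First, your stated success condition for the direct attempt, $|X_1-X_2|\le\max(t_1,t_2)$, is not the right one: the caps act asymmetrically, and what you need is $-t_2\le X_1-X_2\le t_1$.  (Your wrapping cases are keyed to the correct thresholds, so this is only a mis-statement.)  Second, and more substantively, your case split leaves a hole: when the gap is already in range but $\min(X_1,X_2)>t^1_{12}+t^2_{12}$, neither your direct choice $c+d=\min(X_1,X_2)$ nor your wrap rule applies (e.g.\ $t_1=t_2=3$, $t^1_{12}=t^2_{12}=1$, $X_1=X_2=4$).  The fix is easy---allow $c+d$ to be any value in $[\max(Y_1-t_1,Y_2-t_2,0),\,\min(Y_1,Y_2,t^1_{12}+t^2_{12})]$ rather than pinning it to $\min(Y_1,Y_2)$, and check that this interval is nonempty whenever the gap is in range---but you should say so explicitly.
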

Note that to maximize the rates, we want $t^1_{12} + t^2_{12} = t_{12}$.
\begin{proof}

We will describe a constructive proof for Proposition \ref{prop:JRC}. But first,  let $x_{12}$ be a non-negative integer represented by the bit patterns sent out by $t_{12}$ shared transmitters.  Since each shared transmitter can send either a "0" or "1",  $x_{12}$ has $t_{12} + 1$ levels, i.e.,  $x_{12} \in \{0,1, \dots, t_{12}\}$.
Let $x_i$ be a non-negative integer that represents the bit patterns transmitted by $t_i$ exclusive transmitters for receiver $R_i$.  $x_i$ has $t_i$ + 1 levels, i.e., $x_i \in \{0, 1, \dots, t_i\}$.
Let $y_i$ be a non-negative integer that represents the signal received by the receiver $R_i$. Due to additive property, we have:
\begin{equation}
\label{eq:received}
y_i = x_i + x_{12}.
\end{equation}

Next, we note that the achievable rate of a receiver $R_i$ is $\log$ of the number of distinguishable symbols or levels that can be received by $R_i$ per time slot.
Let $c_i$ be a non-negative integer representing the number of distinguishable levels at $R_i$, then $\log{c_i}$ is the achievable rate of $R_i$.  We will show that if $t_1 \geq t^2_{12}$ and  $t_2 \geq t^1_{12}$,
then it is possible to send any arbitrary pattern pair $(b_1, b_2)$ to the receiver $R_1$ and $R_2$ without any error, with
$$b_i \in \{0, 1, \dots,  c_i-1\}.$$
This would establish the proof for Proposition \ref{prop:JRC}. We now describe the encoding and decoding procedures, then verify their correctness.

\textbf{Encoding}: Suppose we want to transmit the pattern $(b_1, b_2)$ to $(R_1,R_2)$, respectively. Then, the encoding is a function that maps $(b_1, b_2)$ into $x_1^*, x_2^*$, and $x_{12}^*$
,i.e.,  $(x_1^*, x_2^*, x_{12}^*) = \mathcal{C}(b_1, b_2)$.
Let the set $\{x_{12}(b_1)\}$ parameterized by $b_1$ consisting of $t_1+1$ elements be defined as:
\begin{equation}
\label{13}
\{x_{12}(b_1)\} = \{b_1-i_1 \mod (c_1), i_1 = 0, 1, \dots, t_1\}.
\end{equation}

Similarly, let the set  $\{x_{12}(b_2)\}$ parameterized by $b_2$ consisting of $t_2+1$ elements be defined as:
\begin{equation}
\label{14}
\{x_{12}(b_2)\} = \{b_2-i_2 \mod (c_2), i_2 = 0, 1, \dots, t_2\}.
\end{equation}

We now encode $b_1$, $b_2$ into  $x_1^*$, $x_2^*$, and $x_{12}^*$ as follows.  We pick $x^*_{12}$ to be the minimum value element in the intersection set of $\{x_{12}(b_1)\}$ and $\{x_{12}(b_2)\}$, i.e., :
$$ x^*_{12} = \min_i \{ x_i \in \{x_{12}(b_1)\} \cap \{x_{12}(b_2)\} \}.$$

Next, we set $x_i^*$, $i = 1,2$ to:
\begin{equation}
\label{15}
x_i^* = b_i - x^*_{12}  \mod (c_i).
\end{equation}

\textbf{Decoding}: $R_i$ receives the signal:
\begin{equation}
\label{eq:received_optimal}
y_i = x_i^* + x_{12}^*,
\end{equation}
the sum of the signals transmitted by the exclusive transmitters and shared transmitters.  $R_i$ decodes the transmitted level $b_i$ as:

\begin{equation}
\label{eq:decoding}
\hat{b}_i = y_i \mod (c_i).
\end{equation}


To verify the correctness of encoding and decoding procedures, we need to verify (a) $\{x_{12}(b_1)\} \cap \{x_{12}(b_2)\}$ is non-empty that enables us to choose $x*_{12} = \min {\{x_{12}(b_1)\} \cap \{x_{12}(b_2)\}}$; (b)  $x^*_{12} \le t_{12}$. This  is required since we want the $t_{12}$ shared transmitters to be able to represent $x^*_{12}$; (c) $0\le x_1^* \le t_1$ and $0 \le x_2^*  \le t_2$ to enable the exclusive transmitters to represent $x_i$; (d) $\hat{b}_i=b_i$ for the correctness of the decoding procedure.

First, we will verify the condition (a). From the definition (Eqs. (\ref{13}) and (\ref{14}), the sets
$\{x_{12}(b_i)\}$  consists of $(t_i+1)$ distinct elements each.  Furthermore,

$$\{x_{12}(b_i)\} \subseteq \{0,1,\dots , \max(c_1-1,c_2-1)\},$$
$$|\{x_{12}(b_1)\} \cup \{x_{12}(b_2)\}| \leq \max(c_1,c_2).$$

The number of elements in $\{x_{12}(b_1)\} \cap \{x_{12}(b_2)\}$ set is:
\begin{eqnarray}
|\{x_{12}(b_1)\} \cap \{x_{12}(b_2)\}| &=& |\{x_{12}(b_1)\}| + |\{x_{12}(b_2)\}| \nonumber\\
&-& |\{x_{12}(b_1)\} \cup \{x_{12}(b_2)\}| \nonumber \\
&\geq&  t_1+1+t_2+1-\max(c_1,c_2). \nonumber
\end{eqnarray}

Now since $c_1=t_1+t_{12}^1+1$ and $c_2=t_2+t_{12}^2+1$, we have:
\begin{small}
\begin{eqnarray}
\label{12}
|\{x_{12}(b_1)\} \!\cap \{x_{12}(b_2)\}| \,\geq  \min(t_2-t^1_{12}+1, t_1-t^2_{12}+1)
\end{eqnarray}
\end{small}
Using the conditions in Proposition \ref{prop:JRC}: $t_1 \geq t_{12}^{2}$ and $t_2 \geq t_{12}^{1}$, we conclude the intersection set $|\{x_{12}(b_1)\} \!\cap \{x_{12}(b_2)\}|$ has at least one element, and therefore we can pick $x^*_{12}$.

Next, we will prove condition (b) by contradiction by assuming
\begin{equation}
\label{11}
x^*_{12} > t_{12}.
\end{equation}

Let $x_{12}^{\max}$ be the maximum element in $\{x_{12}(b_1)\} \cap \{x_{12}(b_2)\}$. Then,
\begin{small}
\begin{eqnarray}
\! x_{12}^{\max} &\geq& \! x_{12}^* + |\{x_{12}(b_1)\} \cap \{x_{12}(b_2)\}| - 1 \nonumber\\
 &>& \, t_{12} + |\{x_{12}(b_1)\} \cap \{x_{12}(b_2)\}|-1  \label{16}\\
 &\geq & \! \min(t_{12}+t_2-t^1_{12}, t_{12}+t_1-t^2_{12}) \label{17}\\
 &\geq & \! \min(t_{12}^1+t_{12}^2+t_2-t^1_{12}, t_{12}^1+t_{12}^2+t_1-t^2_{12}) \label{thinh18}\\
 &= & \! \min(t_2+t^2_{12}, t_1+t^1_{12}) \nonumber\\
 &= & \! \min(c_2-1,c_1-1), \nonumber
\end{eqnarray}
\end{small}
where (\ref{16}), (\ref{17}) and (\ref{thinh18}) are due to (\ref{11}), (\ref{12}) and (\ref{19}), respectively. Therefore $x_{12}^{\max}$ is strictly greater than $\min(c_2-1,c_1-1)$. But this contradicts with the way we constructed the set $\{x_{12}(b_1)\} \cap \{x_{12}(b_2)\}$ whose maximum element cannot exceed $\min(c_1-1,c_2-1)$ due to $\mod c_1$ and $\mod c_2$ operation in the encoding procedure. Therefore,$x^*_{12}$ must satisfy condition (b).

%
Next, due to $x_{12}^* \in \{x_{12}(b_1)\} \cap \{x_{12}(b_2)\} $ and from (\ref{13}), (\ref{14}), we have:
$$b_i-x_{12}^* \in \{0,1,\dots,t_i\} {\mod (c_i)}$$

Therefore, from (\ref{15}):
\begin{equation}
\label{eq:x_i}
x_i^* = b_i - x_{12}^* \in \{0,1,\dots,t_i\} {\mod (c_i)}.
\end{equation}

This establishes the verification for (c).

The correctness of condition (d) can be easily seen by noting that $b_i = \hat{b}_i$ by combining Eqs. (\ref{eq:received_optimal}),  (\ref{eq:decoding}), and  (\ref{eq:x_i}).
\end{proof}

\begin{exmp}
\label{example_1}
To illustrate Proposition \ref{prop:JRC}, we will show an example of a topology consisting of three transmitters and two receivers shown in Fig. \ref{fig:n_3_m_2}(a).  The number of exclusive transmitters for $R_1$ and $R_2$ are $t_{1}=0$ and $t_{2}=2$ while the number of shared transmitters $t_{12}=1$.
Choose $t^1_{12} = 1$ and $t^2_{12}=0$, then this pair is valid since:

$$
\begin{cases}
t^1_{12},t^2_{12}\geq 0, \\
t^1_{12}+t^2_{12}\leq t_{12}=1, \\
t_{1}\geq t^2_{12},\\
t_{2}\geq t^1_{12}.
\end{cases}
$$

Then, from Proposition \ref{prop:JRC}, the  achievable rate of $R_1$ is $\log({t_1}+t^1_{12}+1)=\log {(c_1)}=\log{(2)}$, and for $R_2$ is  $\log(t_2+t^2_{12}+1)=\log{(c_2)}=\log{(3)}$.
Therefore, $R_1$, $R_2$ can achieve arbitrary pattern $(b_1,b_2)$ with $b_1 \in \{0,1\}$ and $b_2 \in \{0,1,2\}$, respectively.

To illustrate the encoding and decoding procedures, suppose that $(b_1,b_2)=(1,2)$ is desired pattern in $R_1$, $R_2$. Then encoding and decoding procedure will be presented as below to find $(x_1^*, x_2^*, x_{12}^*) = \mathcal{C}(b_1, b_2)$.

\textbf{Encoding}: the encoding procedure will construct two sets:
\begin{eqnarray*}
\{x_{12}(b_1)\} &=& \{1-i_1 \mod (2), i_1 = 0\} =\{ 1\}. \\
\{x_{12}(b_2)\} &=& \{2-i_2 \mod (3), i_2 = 0, 1, 2\}=\{2, 1, 0 \}.
\end{eqnarray*}

Then, $\{x_{12}(b_1)\} \cap \{x_{12}(b_2)\} = \{1\}$. Choose $x_{12}^*=1$. Next, construct $x_1$ and $x_2$ as:
$$x_1^*=b_1-x_{12}^*=1-1=0 \mod (2).$$
$$x_2^*=b_2-x_{12}^*=2-1=1 \mod (3).$$

Hence, $(x_1^*,x_2^*,x_{12}^*)=(0, 1, 1)$.

\textbf{Decoding}: the decoding procedure will decode by summing up all received signals at each receiver, ie,:
$$\hat{b}_1=x_1^*+x_{12}^*=0 + 1 = 1 \mod(2) = b_1.$$
$$\hat{b}_2=x_2+x_{12}=1 + 1 = 2 \mod(3) = b_2.$$
\end{exmp}

Similar to ERC method, the JRC method can be extended to arbitrary number of receivers. Next, we will present the extended results for $n$ receivers with pairwise sharing transmitters.

\begin{proposition}{(Achievable rates for $n$-receiver pairwise sharing transmitter topology)}
\label{prop:JRC extend}
Given a topology consisting of $n$ receivers $R_1,R_2,\dots,R_n$, if each receiver $R_i$ has $t_i$ exclusive transmitters and $t_{ip}$ sharing transmitters with other receiver $R_p$. Then the receiver $R_i$ can achieve the rate:
 $$\log(c_i^n)=\log{(t_i + \sum_{p \neq i;p=1}^{p=n}{{t}^i_{ip}} + 1)}.$$
bits per time slot in which $i$ is the notation for the receiver $R_i$ and $n$ is the number of receiver in network if with $\forall p \in \{1, \dots ,n\}$ and  $p\neq i $:
\begin{equation}
{t^i_{ip}} \leq {t_{p}} \label{21}.
\end{equation}

Note: In the case $t_{ip}=0$, i.e.,  $R_i$ and $R_p$ do not share any transmitter, then in the inequality, $t_{p}$ will be replaced by ``0" or the number of sharing transmitters assigned to $R_i$ is ${t^i_{ip}}=0$.
\end{proposition}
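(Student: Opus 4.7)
The plan is to extend the two-receiver JRC encoding/decoding of Proposition \ref{prop:JRC} to the $n$-receiver setting, leveraging the fact that in a pairwise-sharing topology the shared transmitter groups are mutually disjoint.

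First I would set up the variables: let $x_i \in \{0,1,\dots,t_i\}$ denote the integer level produced by $R_i$'s exclusive transmitters, and for each pair $(i,p)$ let $x_{ip} \in \{0,1,\dots,t_{ip}\}$ denote the level on the $t_{ip}$ transmitters shared between $R_i$ and $R_p$. Because pairwise sharing means no transmitter is shared by three or more receivers, the received signal at $R_i$ is cleanly additive,
\begin{equation*}
y_i \;=\; x_i + \sum_{p\neq i} x_{ip},
\end{equation*}
and the decoding rule mirrors Proposition \ref{prop:JRC}: $\hat{b}_i = y_i \bmod c_i^n$, with $c_i^n = t_i + \sum_{p\neq i} t_{ip}^i + 1$. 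Proving the proposition then reduces to constructing, for any target tuple $(b_1,\dots,b_n)$, valid integer levels $x_i^*$ and $x_{ip}^*$ so that $y_i \equiv b_i \pmod{c_i^n}$ for every $i$ simultaneously.

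Next I would adapt the set-intersection argument of Proposition \ref{prop:JRC} to each pair $(i,p)$ separately. For each pair, I define sets analogous to \eqref{13} and \eqref{14}, namely $\{x_{ip}(b_i)\}$ and $\{x_{ip}(b_p)\}$, consisting of those values of $x_{ip}$ for which the residual contributions $x_i + \sum_{q\neq i,p} x_{iq}$ (respectively for $R_p$) can still be realized in their valid ranges. Condition \eqref{21}, applied to both $(i,p)$ and to $(p,i)$, yields $t_{ip}^i \leq t_p$ \emph{and} $t_{ip}^p \leq t_i$, which are exactly the hypotheses of Proposition \ref{prop:JRC} for the pair. I would then re-run, pair by pair, the counting bound used in that proof to show the pairwise intersection is non-empty, choose $x_{ip}^* = \min\{x_{ip}(b_i)\} \cap \{x_{ip}(b_p)\}$, and verify $x_{ip}^* \leq t_{ip}$ by the same contradiction argument. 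Disjointness of the pairwise shared groups is what lets me treat different $x_{ip}^*$'s as independent free variables at the transmitter side.

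Once all the $x_{ip}^*$ are fixed, I would recover the exclusive levels by $x_i^* = b_i - \sum_{p\neq i} x_{ip}^* \bmod c_i^n$, show $x_i^* \in \{0,\dots,t_i\}$ by the analogous range check (the residual "shift" that each exclusive transmitter bank must absorb is bounded by $t_i$ precisely because condition \eqref{21} is imposed on every pair in which $R_i$ appears as the $p$-side partner), and verify decoding correctness by observing that $y_i \bmod c_i^n = (x_i^* + \sum_p x_{ip}^*) \bmod c_i^n = b_i$, exactly as in Proposition \ref{prop:JRC}. Illustrating the construction on a small three-receiver example, in parallel to Example \ref{example_1}, would help make the bookkeeping transparent.

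The main obstacle I anticipate is precisely that bookkeeping step: showing that the pairwise intersection arguments remain non-empty, and that the exclusive residuals $x_i^*$ stay valid, \emph{when a single receiver $R_p$ participates in many pairs at once}. The hypothesis $t_{ip}^i \leq t_p$ is only local (per pair), yet $R_p$'s single exclusive bank $t_p$ must simultaneously absorb contributions from every pair $(i,p)$. The structural fact that rescues the argument is pairwise disjointness of the shared groups: because different $x_{ip}^*$'s are physically carried by disjoint transmitters, the per-pair choices do not interfere at the transmitter, so the per-pair Proposition \ref{prop:JRC} conditions translate cleanly into global consistency. Making this decoupling rigorous, either by processing the pairs sequentially and updating residual targets, or by solving all intersections in parallel, is where the bulk of the technical work in the proof will live.
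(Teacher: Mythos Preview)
Your approach is genuinely different from the paper's. The paper proves Proposition~\ref{prop:JRC extend} by \emph{induction on the number of receivers}: it assumes the result for an $(n-1)$-receiver topology, adjoins a new receiver $R_n$ together with all of its pairwise shared groups $t_{in}$, and then verifies two things --- that each old receiver $R_i$ can extend its symbol set from $\{0,\dots,c_i^{n-1}-1\}$ to $\{0,\dots,c_i^n-1\}$ by using the newly available $t_{in}^i$ transmitters, and that $R_n$ itself reaches $\log c_n^n$ by varying free parameters $\delta_i\in\{0,\dots,t_{in}^n\}$ on the remaining shared transmitters while the old receivers re-encode $b_i-\delta_i$ to compensate. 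You instead attempt a direct, non-inductive construction by running the two-receiver intersection argument of Proposition~\ref{prop:JRC} on every pair $(i,p)$ in parallel.

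The gap in your plan is the one you suspect but misdiagnose. You argue that because the shared transmitter groups are disjoint \emph{on the transmitter side}, the per-pair choices $x_{ip}^*$ do not interfere and ``the per-pair Proposition~\ref{prop:JRC} conditions translate cleanly into global consistency.'' But the coupling that matters is on the \emph{receiver} side: $R_i$ observes $y_i=x_i+\sum_{p\neq i}x_{ip}$, so the candidate set $\{x_{ip}(b_i)\}$ you want to build for the pair $(i,p)$ already depends on every other $x_{iq}$, $q\neq p$, which has not yet been fixed. There is no non-circular order in which the pairwise intersections can be computed independently, so the parallel version of your encoder is not well defined. Your fallback of ``processing the pairs sequentially and updating residual targets'' is the right instinct, but once you process \emph{receivers} sequentially and let each $R_i$ absorb, via its own allocated transmitters, whatever the later receivers impose, you have essentially reconstructed the paper's induction. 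The structural fact that makes this work --- and the real content of condition~\eqref{21} --- is that the levels $R_i$ controls, namely $x_i$ together with the $t_{ip}^i$-transmitter contributions, range over a complete residue system $\{0,\dots,c_i^n-1\}$ modulo $c_i^n$, so $R_i$ can always compensate for any interference once that interference is fixed; this is exactly what the inductive step exploits, and it is not captured by a purely pairwise, parallel application of Proposition~\ref{prop:JRC}.
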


We also note that Proposition \ref{prop:JRC extend} is only applicable to topologies with pair-wise sharing transmitters only, i.e., any transmitter can cover at most two receivers.  Furthermore, the rate region for all the receivers are specified by the tunable values $t^i_{ip}$ such that the conditions in Proposition \ref{prop:JRC extend} are satisfied for all $i$ and $p$.  The larger $t^i_{ip}$ will allow the receiver $R_i$ to obtain a larger rate at the expense of a reduced rate for $R_p$.

From two receivers $R_i$ and $R_p$ perspective, Proposition \ref{prop:JRC extend} states that receiver $R_i$ can be allocated ${t^i_{ip}}$ transmitters from $t_{ip}$ sharing transmitters between $R_i$ and $R_p$  if:
 $${t^i_{ij}} \leq t_j.$$
Therefore, by applying Proposition \ref{prop:JRC extend} for all receivers $R_1,R_2,\dots,R_n$, we can solve and distribute suitable rates for all receivers in a given topology.  The proof of Proposition \ref{prop:JRC extend} is shown below.
\begin{proof}

The proof is based on induction.  For the basis case of two receiver topology ($n = 2$) is true from Proposition \ref{prop:JRC}.  Now,  suppose that Proposition \ref{prop:JRC extend} holds for $n-1$ receiver topology, we will show that Proposition \ref{prop:JRC extend} will also hold for $n$ receiver topology where one more receiver $R_n$ is added to the topology.  Fig. \ref{fig: n-1set} illustrates the inductive method.

First, using Proposition \ref{prop:JRC extend} with $n-1$ receivers topology, receiver $R_i$ with $i \in \{1,\dots,n-1\}$ can achieve the rate:
$$\log(c_i^{n-1})=\log{(t_i + \sum_{p\neq i;p=1}^{p=n-1}{{t}^i_{ip}} + 1)}.$$

It means that receiver $R_i$ is able to distinguish all value in set $\{0, 1, \dots c_i^{n-1}-1\}$.

After adding receiver $R_n$ with $t_n$ exclusive transmitters into network and $t_{in}$ ($i = 1, 2, \dots, n-1$) sharing transmitters, for  Proposition \ref{prop:JRC extend} to hold, we need to verify two following conditions:

\textbf{Condition (a)}: all previous receivers $R_i$ with $i \in \{1,\dots,n-1\}$ can obtain additional ${t}_{in}^i$ states, and therefore achieve the new rates:
\begin{eqnarray*}
\log(c_i^n)&=&\log{(t_i+\sum_{p\neq i;p=1}^{p=n-1}{{t}^i_{ip}}+1 +{t}_{in}^i)}\\
&=&\log{(c_i^{n-1} + {t}_{in}^i)}.
\end{eqnarray*}
Hence,
\begin{eqnarray}c_i^n = c_i^{n-1} + {t}_{in}^i. \label{20}
\end{eqnarray}

To do so, we need to verify that receiver $R_i$ is able to distinguish all values in the set $\{0, 1, \dots c_i^n-1\}$.

\textbf{Condition (b)}: the new receiver $R_n$ also satisfies Proposition \ref{prop:JRC extend}, i.e., $R_n$ is able to achieve the rate:

$$\log(c_n^n)=\log{(t_n + \sum_{p=1}^{p=n-1}{{t}^n_{np}} + 1)}.$$

We  first verify condition (a). Suppose that we need to transmit a signal $b_i$ to the receiver $R_i$, with:
$$b_i \in \{0, 1, \dots c_i^n-1\}.$$

Let us divide $b_i$ into two subsets:

$\bullet$ If $ 0 \leq b_i \leq {c_i^{n-1}-1}$: We will transmit $b_i$ in the $n-1$ previous receiver topology (using the the previous transmitters) and sends ``0" using $t_{in}$ sharing transmitters with new receiver $R_n$. Clearly, receiver $R_i$ will receive correct pattern since by assumption,Proposition \ref{prop:JRC extend} holds true for $n-1$ receiver topology.

$\bullet$ If  $c_i^{n-1}-1 < b_i \leq c_i^n-1$: We will transmit signal $c_i^{n-1}-1$ in the $n-1$ previous receiver topology and send the signal:
$$x_{in}=b_i-(c_i^{n-1}-1) \mod (c_i^n)$$
using the new $t_{in}$ sharing transmitters. Clearly,
\begin{eqnarray}
 x_{in}&=&b_i- (c_i^{n-1}-1) \nonumber \\
  	   &\leq & (c_i^n -1) - (c_i^{n-1}-1)\nonumber\\
       &=& t_{in}^i \label{18} \\
       &\leq & t_{in} \label{22}.
\end{eqnarray}

With (\ref{18}) is due to (\ref{20}), and:
\begin{eqnarray}
 x_{in}&=&b_i-(c_i^{n-1}-1)\nonumber\\
       &\geq& (c_i^{n-1}-1) - (c_i^{n-1}-1)\nonumber\\
       &=& 0 \label{23}.
\end{eqnarray}

From (\ref{22}) and (\ref{23}): $0\leq x_{in} \leq t_{in}^i \leq t_{in}$, then $t^i_{in}$ sharing transmitters can always transmit the signal $x_{in}$. Consequently, the received signal at $R_i$ is $y_i = c_i^{n-1} - 1 + x_{in}$ (Note that $c_i^{n-1} - 1$ comes from the transmitters in previous topology).  Using the same decoding method as in Eq. (\ref{eq:decoding}), we have:
\begin{eqnarray}
\hat{b}_i &=& y_i \mod (c^n_i) \\
&=& c^{n-1}_{i} - 1 + x_{in} \mod (c^n_i) \\
&=& c^{n-1}_i - 1 +  b_i - (c^{n-1}_i-1) \mod(c^n_i) \\
&=& b_i.
\end{eqnarray}

Therefore,  the previous receiver $R_i$ can distinguish all values of $b_i \in \{0, 1, \dots c_i^n-1\}$ and achieve the rate $\log(c_i^n)$ with:
\begin{equation}
\label{ri}
\log(c_i^n)=\log{(t_i + \sum_{p\neq i;p=1}^{p=n}{{t}^i_{ip}} + 1)}.
\end{equation}

Next, we verify condition (b) that the new receiver $R_n$ also satisfies Proposition \ref{prop:JRC extend}, i.e., receiver $R_n$ is able to achieve the rate:
$$\log(c_n^n)=\log{(t_n + \sum_{p=1}^{p=n-1}{{t}^n_{np}} + 1)}.$$

Indeed, for a fixed pattern $b_i$ with $i=1,\dots,n-1$ in the $n-1$ old receivers, we will prove that receiver $R_n$ can discern $c_n^n$ states:
$$b_n \in \{0,1,\dots,c_n^n-1\}$$

Let observe the receiver $R_i$, with fixed pattern $b_i$ as in Fig. \ref{fig: n-1set}. We note that of the $t_{in}$ sharing transmitters between $R_i$ and $R_n$, $t_{in}^i$ transmitters are allocated to $R_i$ and $t_{in}^n$  remaining transmitters will be distributed to $R_n$. Now, we can maintain the pattern $b_i$ by transmitting the pattern $(b_i-\delta_i) \mod c^{n-1}_{i}$ using the transmission method as described in condition (a), then transmit pattern $\delta_i$ in $t_{in}^n$ remaining transmitters, where

$$0 \leq \delta_i \leq t_{in}^n, $$

since the number of levels in $\delta_i$  cannot exceed the number of transmitters.

Now, from  condition (\ref{21}) from Proposition \ref{prop:JRC extend} to the pairwise sharing transmitter between $R_n$ and $R_i$, we have:
$$t_{in}^n \leq t_i.$$

Therefore,
$$0 \leq \delta_i \leq t_{in}^n \leq t_i.$$

The inequality above together with the $t_n$ exclusive transmitters show that $R_n$ is able to achieve $(t_{in}^n+1)$ distinguishable states in pairwise sharing transmitter between $R_i$ and $R_n$ when:
$$\delta_i \in \{0,1,\dots,t_{in}^n\}.$$

Thus, for the all shared transmitters between $R_1, R_2,\dots, R_{n-1}$ with $R_n$ and $t_n$ exclusive transmitters of $R_n$, the number of distinguishable levels at $R_n$ is:
$$c_n^n = t_n + \sum_{p=1}^{p=n-1}{t_{np}^n} + 1.$$
Then, the achievable rate can be achieved in $R_n$ is:
\begin{equation}
\label{rn}
\log(c_n^n)=\log{(t_n + \sum_{p=1}^{p=n-1}{{t}^n_{np}} + 1)}.
\end{equation}

\end{proof}


\begin{figure}
\hspace*{-0cm}
  \centering
  \includegraphics[width=2.8in]{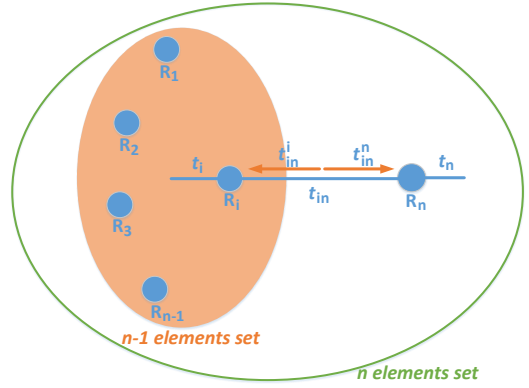}
  \caption{Inductive method from $n-1$ element set to $n$-element set }\label{fig: n-1set}
  \vspace{-.1in}
\end{figure}

In practice, there are many deployments that are not pairwise sharing topologies.  We have a simple following result regarding the multi-user capacities:

\begin{proposition}
\label{prop:limit 2}
Given an arbitrary topology with $k$ transmitters and $n$ receivers $R_1,R_2,..., R_n$.   If each receiver $R_i$ has an achievable rate $\log{(c_i^n)}$ bits per time slot, then
$$\sum_{i=1}^{i=n}{\log {c_i^n}} \leq k.$$
\end{proposition}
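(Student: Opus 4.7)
The plan is to prove this upper bound via a simple counting / information-theoretic argument: with $k$ binary (OOK) transmitters, each time slot produces at most $2^k$ distinct transmit vectors, and the total number of reliably decodable joint messages cannot exceed this.

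First, I would fix a single time slot and let $\mathbf{x} \in \{0,1\}^k$ denote the joint pattern produced by the $k$ transmitters. Since every transmitter uses OOK, the set of admissible transmit vectors has cardinality at most $2^k$. Next, I would invoke the hypothesis: each receiver $R_i$ achieves rate $\log c_i^n$ per slot, meaning there exists an encoder that maps any message tuple $(b_1,\dots,b_n)$ with $b_i \in \{0,1,\dots,c_i^n-1\}$ into a transmit vector $\mathbf{x}(b_1,\dots,b_n)$ in such a way that each $R_i$ can recover its own $b_i$ from its received signal. The total number of such message tuples is $\prod_{i=1}^{n} c_i^n$.

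The key step is to argue that the encoding map $(b_1,\dots,b_n) \mapsto \mathbf{x}(b_1,\dots,b_n)$ must be injective. If two distinct message tuples $(b_1,\dots,b_n)$ and $(b_1',\dots,b_n')$ were encoded to the same transmit vector $\mathbf{x}$, then every receiver would observe exactly the same signal under both tuples (since the received signal at $R_i$ is a deterministic function of $\mathbf{x}$ via the channel matrix $A_i$ from Section \ref{sec:channel}), and at least one receiver would be forced to decode incorrectly in one of the two cases. Hence the encoder is one-to-one, which gives
\begin{equation*}
\prod_{i=1}^{n} c_i^n \;\leq\; 2^k.
\end{equation*}

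Taking logarithms of both sides yields the desired inequality $\sum_{i=1}^{n} \log c_i^n \leq k$. I do not anticipate a significant obstacle here; the only subtlety worth flagging is the injectivity argument, since the statement allows an arbitrary (possibly non-pairwise-sharing) topology, and one must make sure the "same $\mathbf{x}$ implies same received signals at all receivers" reasoning is topology-independent — but this is immediate from the channel being a deterministic function of $\mathbf{x}$ regardless of how the cones overlap. If time-sharing across multiple slots is allowed, the bound can be applied slot-by-slot and averaged, preserving the inequality.
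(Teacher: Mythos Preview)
Your proof is correct and follows essentially the same idea as the paper's own argument: the paper simply observes that $k$ binary transmitters can generate at most $k$ bits of information per time slot, and this total must be shared among all receivers. Your injectivity-of-the-encoder argument is a more carefully spelled-out version of this same counting bound, making explicit why $\prod_i c_i^n \le 2^k$ and hence $\sum_i \log c_i^n \le k$.
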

\begin{proof}

Since the maximum bit rate can generate by all $k$ transmitters is $k$ bits per second.  This total rate must be shared among all the receivers.  Thus, the proof follows.

\end{proof}

{\bf General Topology.} Proposition \ref{prop:limit 2} is less useful since the described achievable rate region does not exploit the topological information.
In what follows, we describe a very simple algorithm for converting many non-pairwise sharing topologies into a pair-wise sharing topology whose achievable rate region can be characterized.   In particular, a general topology consisting of $k$ transmitters and $n$ receivers can be characterized by collection of sets of different types of transmitters: exclusive transmitters, pairwise sharing transmitters,3-sharing transmitters, ..., $n$-sharing transmitters.

Initially, we construct the pairwise sharing topology that is characterized by all the exclusive and pairwise sharing transmitters from the set of all the transmitters.  If the condition of Proposition \ref{prop:JRC extend} is satisfied, then the achievable region for this pairwise sharing topology can be characterized.  Now, the achievable region for a new topology that includes the existing pair-wise sharing topology and one additional $n$-sharing transmitter ($n > 2$) can be computed as follows.  Suppose this new transmitter is shared among $R_1, R_2, \dots, R_m$ receivers.  Then we can assign this new transmitter to a pair of receivers in $(R_1, R_2, \dots, R_m)$.  Suppose $R_i$ and $R_j$ were chosen, then the number of shared transmitters for this pair $t_{R_iR_j}$ is increased by one.  Effectively, we have a new pairwise sharing topology.

However since a transmission by new shared transmitter will affect the receivers $R_1, R_2, \dots, R_m$, we need to modify the encoding procedure slightly.  First, if the new transmitter $t_{R_iR_j}$ transmits bit "0", the encoding procedure for the bit pattern $b_i$ intended for $R_i$ is the same as one used for the pair-wise sharing topology without the new shared transmitter. This is because the bit "0" does not interfere with other signals.   If $t_{R_iR_j}$ transmits bit "1", then to transmit the original bit pattern $b_l$ intended for receiver $R_l$, $ l \neq i, j$, we encode $b_l - 1$ instead using the same encoding (transmission) procedure for the pair-wise sharing topology without $t_{R_i,R_j}$.  Similar to the proof for Proposition \ref{prop:JRC extend}, specifically condition (b), it is to see that all the receiver $R_l$, $l \neq i,j$ will be able to recover original bit pattern $b_l$.  Specifically, either receivers $R_i$ or $R_j$ will increase its capacity to $\log(c_i + 1)$ or  $\log(c_i + 1)$, depending on whether $t_{R_i,R_j}$ is assigned to $R_i$ or $R_j$, while other receivers will have the same capacities as before.

{\bf Maximum Sum Rate.} Generally, the procedure of adding a new shared transmitters is repeated and the corresponding achievable regions can be characterized if the conditions in Proposition \ref{prop:JRC extend} are satisfied.  We also note that there are exponential large number of ways that the shared transmitters can be assigned to receivers, but the number of valid assignments based on Proposition \ref{prop:JRC extend}, are generally a lot smaller. On the other hand, to maximize the sum rate of all the receivers, we have a greedy algorithm for determining which receiver should get a new shared transmitter during the allocation.  Specifically, we will allocate the shared transmitter to the receiver with smallest rate at every step for the following reason.

If we allocate a shared transmitter $t_{R_i,R_j}$ to $R_i$ which currently has an achievable rate $\log(c_i)$, then the capacity gain for $R_i$ is:
$$\log(c_i+1)-\log(c_i)=\log(1+1/c_i).$$
Similarly if we allocate a shared transmitter $t_{R_i,R_j}$ to $R_j$, then the capacity gain for $R_j$ is:
$$\log(c_j+1)-\log(c_j)=\log(1+1/c_j).$$
Clearly, $\log(1+1/c_i) \ge \log(1+1/c_j)$ if $c_i \le c_j$. So, we should allocate the shared transmitter to the receiver with the smallest capacity currently if we want largest gain in one step (greedy) in capacity.

\begin{exmp}
\label{example_2}
 We use this example to illustrate the procedure for converting a non-pair-wise sharing topology to pair-wise sharing topology and obtain a point in the achievable rate region. Fig. \ref{fig: convert}(a) represents a non-pairwise sharing topology  with $t_1=1,t_2=1,t_3=2,t_{12}=t_{23}=t_{31}=2$, and  $t_{123}=1$.

Suppose we allocate $t_{123}$ to the pair $(R_1,R_3)$. Applying the aforementioned conversion procedure, we obtain the resulted pair-wise topology shown in Fig. \ref{fig: convert}(b) with:
$$t_{13}'=t_{13}+ 1 =3.$$
\begin{figure}
\hspace*{-1 cm}
  \centering
  \includegraphics[width=3.5in]{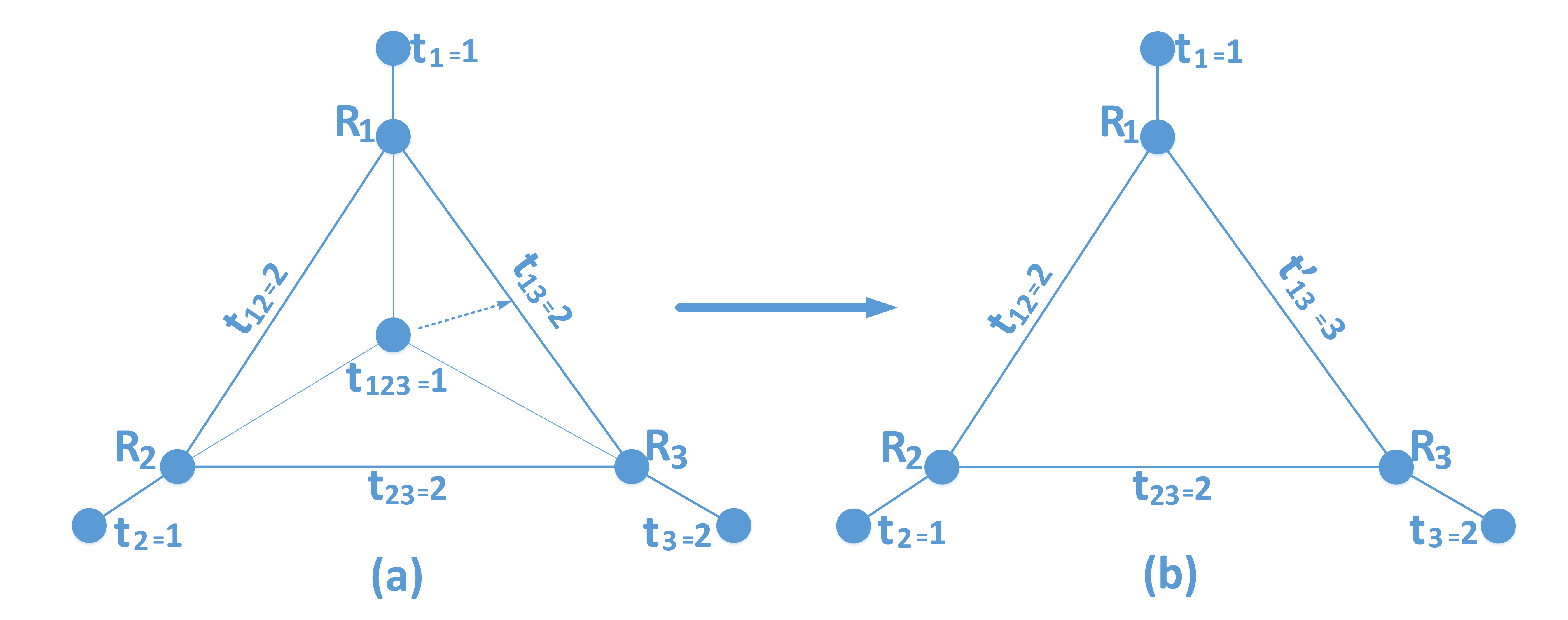}
  \caption{Convert high level connection to two level connection }\label{fig: convert}
  \vspace{0.1 in}
\end{figure}

Now we have a choice of selecting value for ${t_{13}^1}'$ and ${t_{13}^1}'$ .  However,  based on Proposition \ref{prop:JRC extend}, the following constraints must hold:
$$
\begin{cases}
t_{12}^1+t_{12}^2\leq t_{12}=2, \\
{t_{13}^1}'+{t_{13}^3}'\leq t_{13}'=3, \\
t_{23}^2+t_{23}^3\leq t_{23}=2, \\
0 \geq t_{12}^1 \leq t_2 =1,\\
0 \geq t_{12}^2 \leq t_1 =1,\\
0 \geq {t_{13}^1}' \leq t_3 =2,\\
0 \geq {t_{13}^3}' \leq t_1 =1,\\
0 \geq t_{23}^2 \leq t_3 =2,\\
0 \geq t_{23}^3 \leq t_2 =1.\\
\end{cases}
$$

All the pairs of $(t_{12}^1$, ${t_{13}^1}'$, $t_{12}^2$, $t_{23}^2$, ${t_{13}^3}'$, $t_{23}^3)$ that can satisfy the above constrains are valid to distributed to receivers $(R_1,R_2,R_3)$. For example the pairs $t_{12}^1=1$, $t_{12}^2=1$, ${t_{13}^1}'=2$, ${t_{13}^3}'=1$, $t_{23}^2=1$, $t_{23}^3=1$ are valid. Hence, $R_1,R_2$ and $R_3$ can achieve the rate $\log{(5)},\log{(4)}$ and $\log{(5)}$ bit per time slot, respectively.

 As an example to illustrate the encoding process when using a non-pairwise sharing transmitter.  Suppose that we want to transmit the pattern $(b_1=2,b_2=3,b_3=5)$ to $(R_1,R_2,R_3)$, respectively. Based on the conversion procedure discussion, there are two cases to consider: $x_{123}=0$ and $x_{123}$ = 0.

$\bullet$ Suppose $x_{123}=0$, then based on the encoding in tge conversion procedure, the pattern $(b_1=2,b_2=3,b_3=5)$ is transmitted normally. Using Proposition \ref{prop:JRC extend}, we construct $n=3$ sets according the encoding procedure:
$$
\begin{cases}
x_{12}+x_{13} \in \{b_1 - i_1 , i_1=0, 1\} = \{2, 1\} \mod(5),\\
x_{12}+x_{23} \in \{b_2 - i_2 , i_1=0, 1\} = \{3, 2\} \mod(4),\\
x_{13}+x_{23} \in \{b_3 - i_3 , i_1=0, 1, 2\} = \{5, 4, 3\} \mod(5),\\
0 \leq x_{12} \leq 2, \\
0 \leq x_{13} \leq 2, \\
0 \leq x_{23} \leq 2.
\end{cases}
$$
Next, a set of feasible solution to the above inequalities is:
$$
\begin{cases}
x_{12}= 0, \\
x_{13}= 2, \\
x_{23}= 2, \\
i_1=x_1= 0, \\
i_2=x_2= 1, \\
i_3=x_3= 1.
\end{cases}
$$

Now, we note that the decoding procedure sums up all the signal at the receiver:
$$
\begin{cases}
b_1=x_1+x_{12}+x_{13}+x_{123}= 2, \\
b_2=x_2+x_{12}+x_{23}+x_{123}= 3, \\
b_3=x_3+x_{13}+x_{23}+x_{123}= 5.
\end{cases}
$$
As seen, they are all correct.

$\bullet$ Suppose $x_{123}=1$. Then based on the encoding in the conversion procedure, the pattern $(b_1=1,b_2=2,b_3=4)$ is transmitted. Using Proposition \ref{prop:JRC extend}, we construct $n=3$ sets based on the encoding procedure:
$$
\begin{cases}
x_{12}+x_{13} \in \{b_1 - i_1 , i_1=0, 1\} = \{1, 0\} \mod(5),\\
x_{12}+x_{23} \in \{b_2 - i_2 , i_1=0, 1\} = \{2, 1\} \mod(4),\\
x_{13}+x_{23} \in \{b_3 - i_3 , i_1=0, 1, 2\} = \{4, 3, 2\} \mod(5),\\
0 \leq x_{12} \leq 2, \\
0 \leq x_{13} \leq 2, \\
0 \leq x_{23} \leq 2.
\end{cases}
$$
Next, a set of feasible solution to the inequality above is:
$$
\begin{cases}
x_{12}= 0, \\
x_{13}= 1, \\
x_{23}= 1, \\
i_1=x_1= 0, \\
i_2=x_2= 1, \\
i_3=x_3= 2.
\end{cases}
$$

Now, the decoding procedure sums up all the signal go to receiver:
$$
\begin{cases}
b_1=x_1+x_{12}+x_{13}+x_{123}= 2, \\
b_2=x_2+x_{12}+x_{23}+x_{123}= 3, \\
b_3=x_3+x_{13}+x_{23}+x_{123}= 5,
\end{cases}
$$
to correctly reconstruct the transmitted patterns.
\end{exmp}
\subsection{Achievable rate Region for ideal channels}
In this section, we will use LAC cooperative transmission techniques SRC, ERC, and JRC to characterize the achievable rate regions for common topologies.

\subsubsection{Achievable Rate Region for Two-Transmitter Topologies}
\label{sec:two-transmitter}
For the two-transmitter topologies with the number of receivers  being smaller than the number of transmitters, there are only two canonical topologies shown in Fig. \ref{fig:n_2_m_2_lac}.  Other topologies where receivers are not in an overlapped region are trivial.

\begin{figure}[h]
\begin{center}
$\begin{array}{cc}
\includegraphics[scale=0.23]{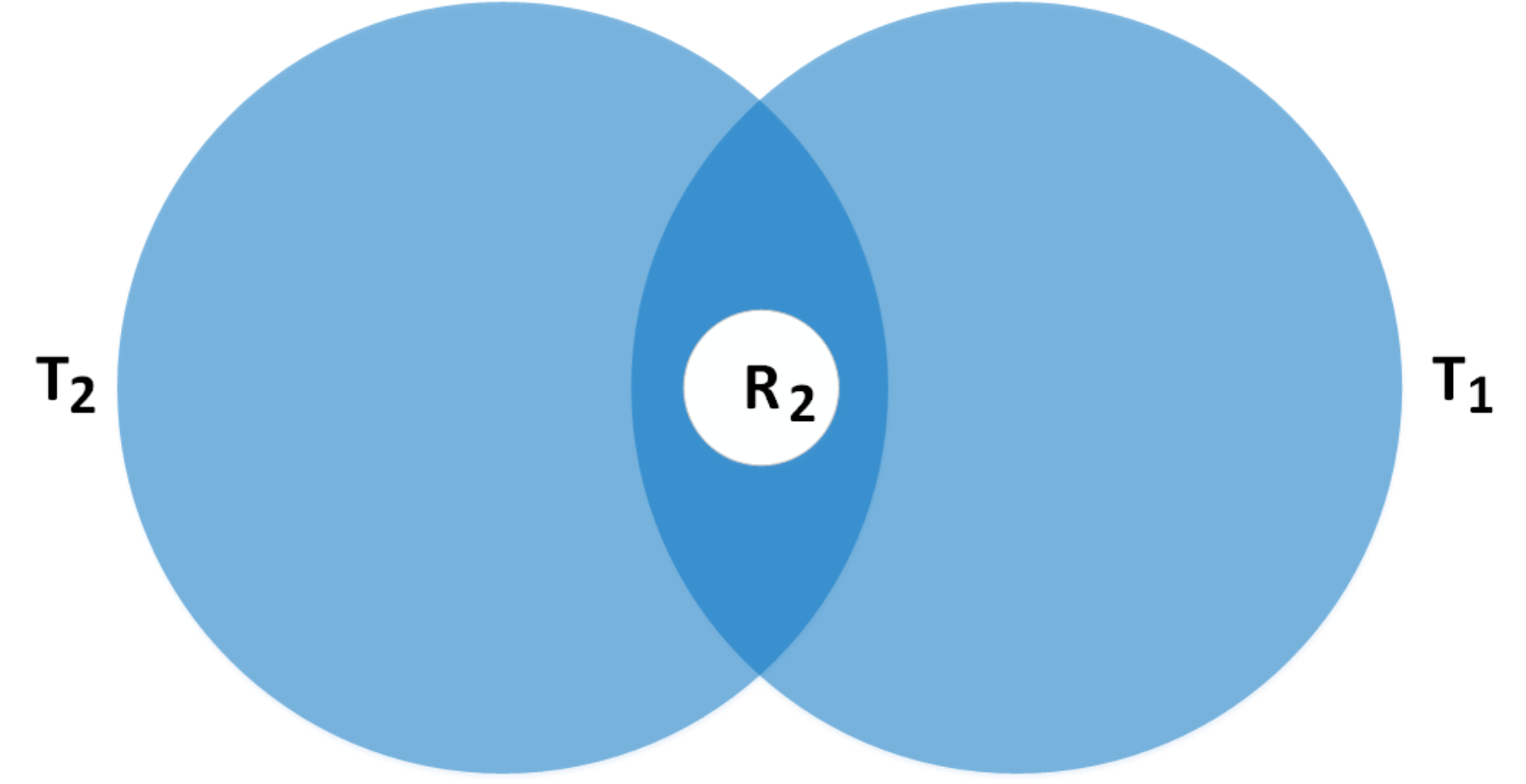} &
\includegraphics[scale=0.23]{LAC21-eps-converted-to}\\
(a) & (b)
\end{array}$
\end{center}
\caption{Topologies for (a) two transmitters and one receiver; (b) two transmitters and two receivers.}
\label{fig:n_2_m_2_lac}
\end{figure}

As discussed in Section \ref{sec:problem}, using time-sharing scheme between $R_1$ and $R_2$, the achievable rate region is depicted as the blue triangle in Fig. \ref{fig:n_2_m_2 rate_region} with its boundary being a linear interpolation of two achievable rate tuples (0,1) and (1,0).
Now, using SRC (Proposition \ref{prop:SRC}) for $R_2$ and $R_1$, rate tuples $(0, \log{3})$ and $(1,0)$ are achievable. Thus, SRC helps  enlarge the achievable region by additional green area.  The achievable region can be further enlarged by an additional yellow area by using SRC for $R_2$ to obtain the rate tuple $(0,\log{3})$  and ERC (Proposition \ref{prop:ERC}) for both $R_1$ and $R_2$ to obtain the rate tuple (1,1).  Consequently, the achievable rate region is obtained by interpolation between the two rate tuples $(0, \log{3})$ and (1,1).

\begin{figure}
  \centering
  \includegraphics[width=2.4in]{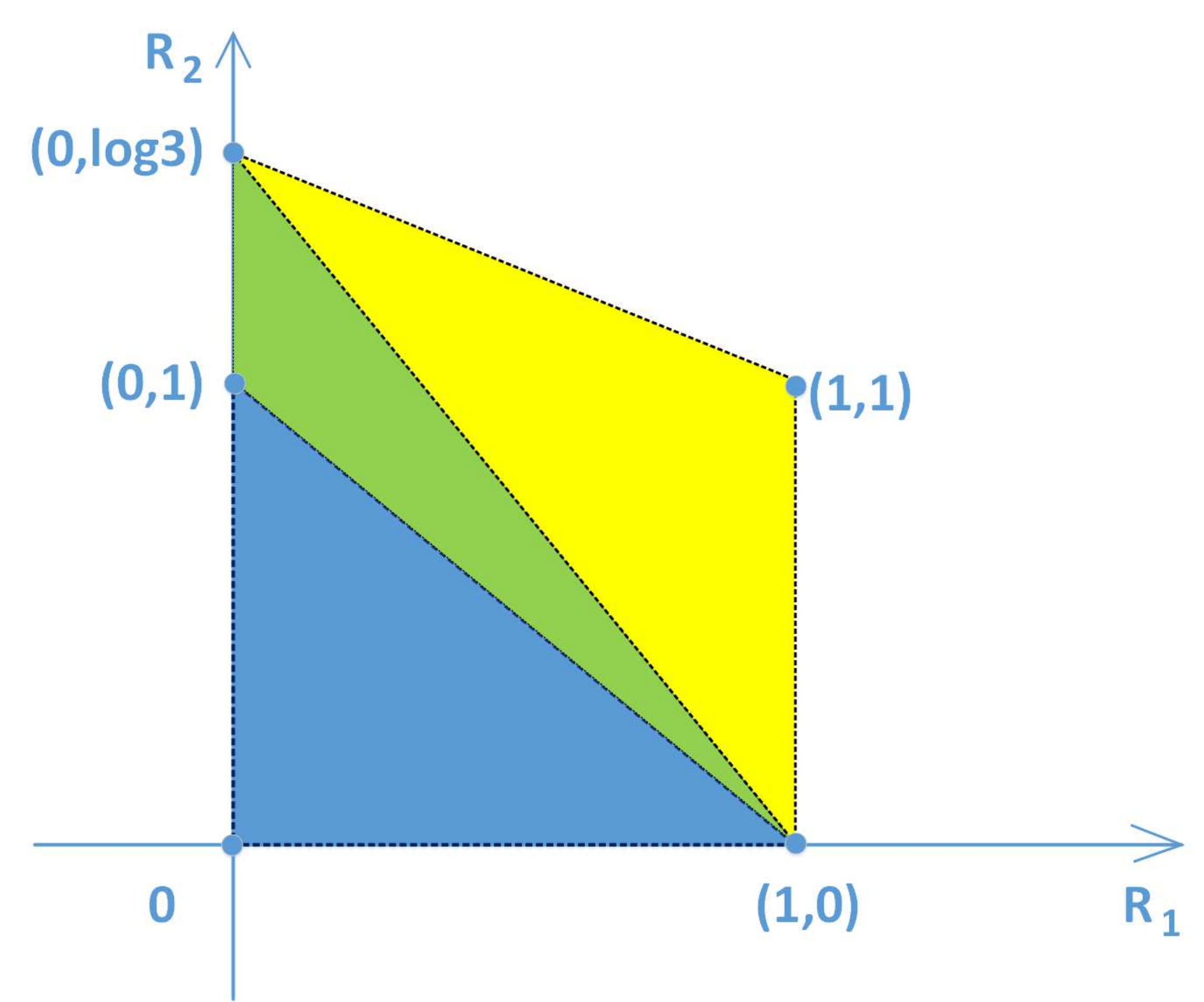}
  \caption{Achievable rate region using SRC scheme for $R_2$ and ERC scheme for both $R_1$ and $R_2$.}\label{fig:n_2_m_2 rate_region}
  \vspace{-.2in}
\end{figure}

\subsubsection{Achievable Rate Region for Three-Transmitter Topologies}
\label{sec:three-transmitter}
Similar to the two-transmitter topologies, the achievable rate region of the three-transmitter topologies is constructed by finding the feasible tuples that can be achieved using SRC and  ERC, and additionally JRC.
Specifically, for three-transmitter topologies, the canonical topologies with the number receivers: 1, 2, and 3,  are as shown in Fig. \ref{fig:n_3_m_2_lac}.  First, using SRC (Proposition \ref{prop:SRC}) for $R_3$, $R_2$ and $R_1$, rate tuples $(0,2,0)$, $(\log{3},0,0)$ and $(0,0,1)$ are achievable as the red triangle shown in Fig.  \ref{fig:Diagram_6}.  Note that the $x, y$, and $z$ coordinates denote the rate for $R_2$, $R_3$, and $R_1$, respectively.
Next, using ERC (Proposition \ref{prop:ERC}), the feasible tuple (1,1,1) can be obtained.  Thus, the achievable region is enlarged  as shown by the  green pyramid with four vertices $(0,2,0)$, $(\log{3},0,0)$ and $(0,0,1)$, and  $(1,1,1)$.

\begin{figure}[h]
\begin{center}
$\begin{array}{cc}
\includegraphics[scale=0.21]{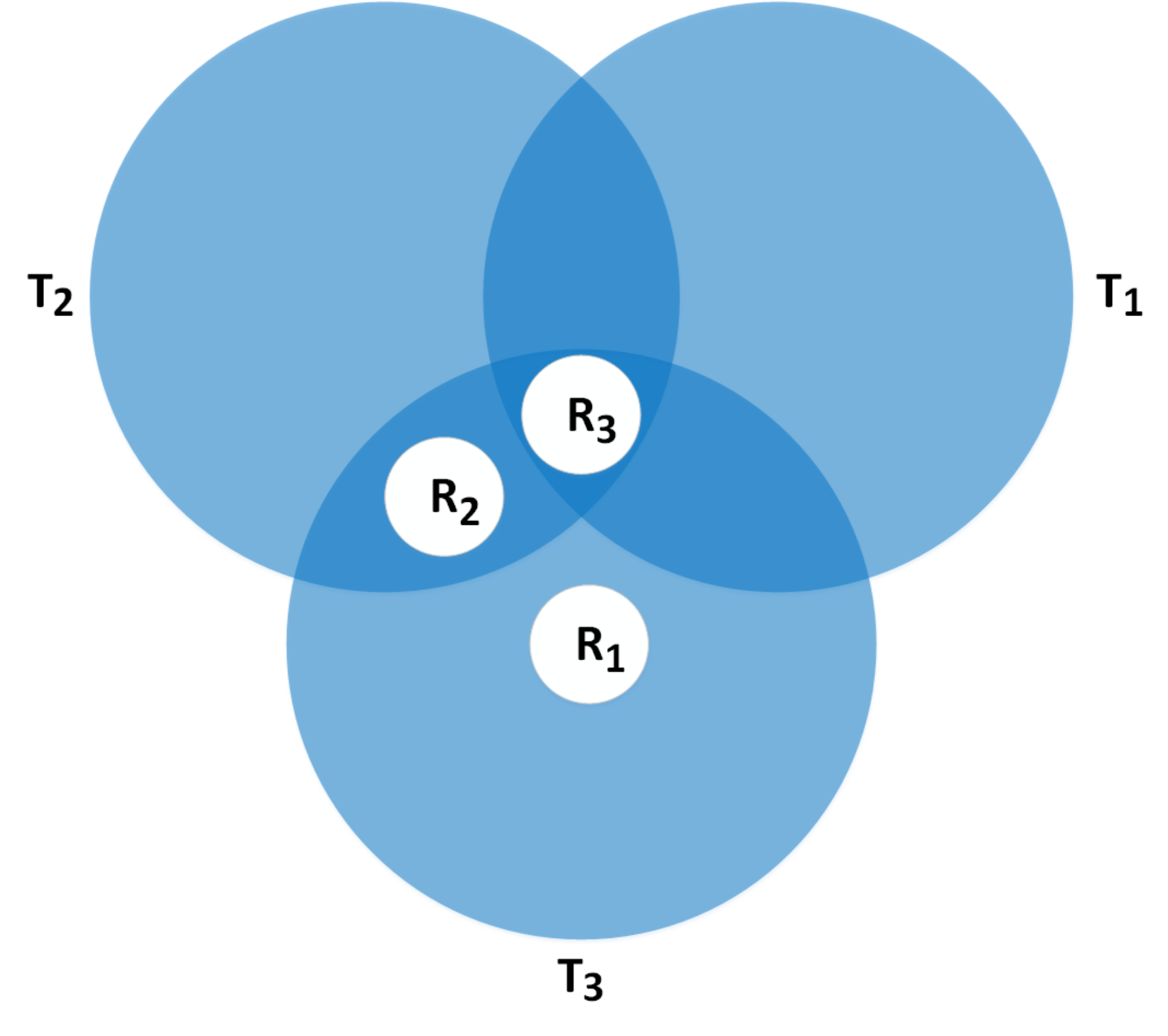} &
\includegraphics[scale=0.21]{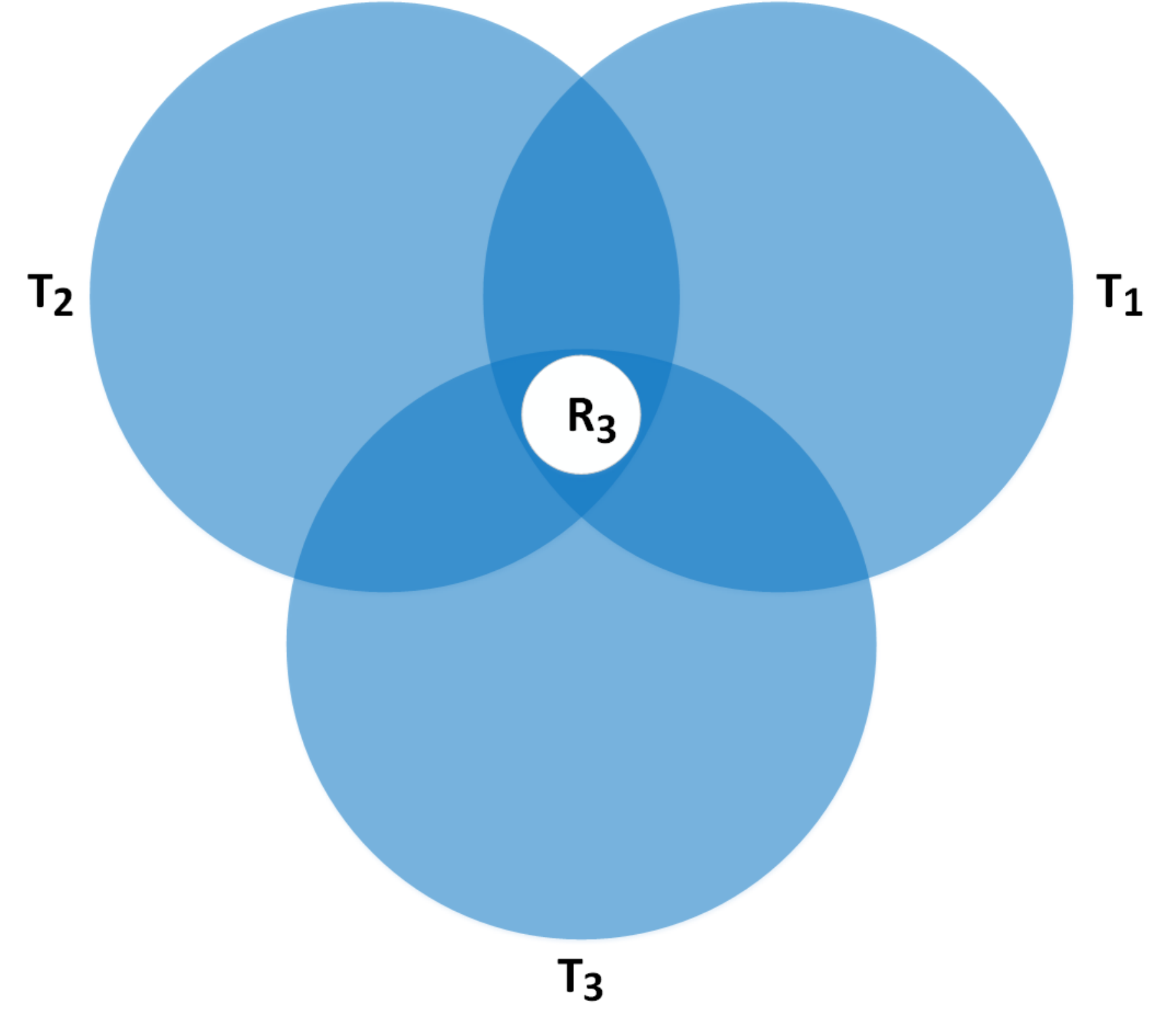}\\
(a) & (b) \\
\includegraphics[scale=0.2]{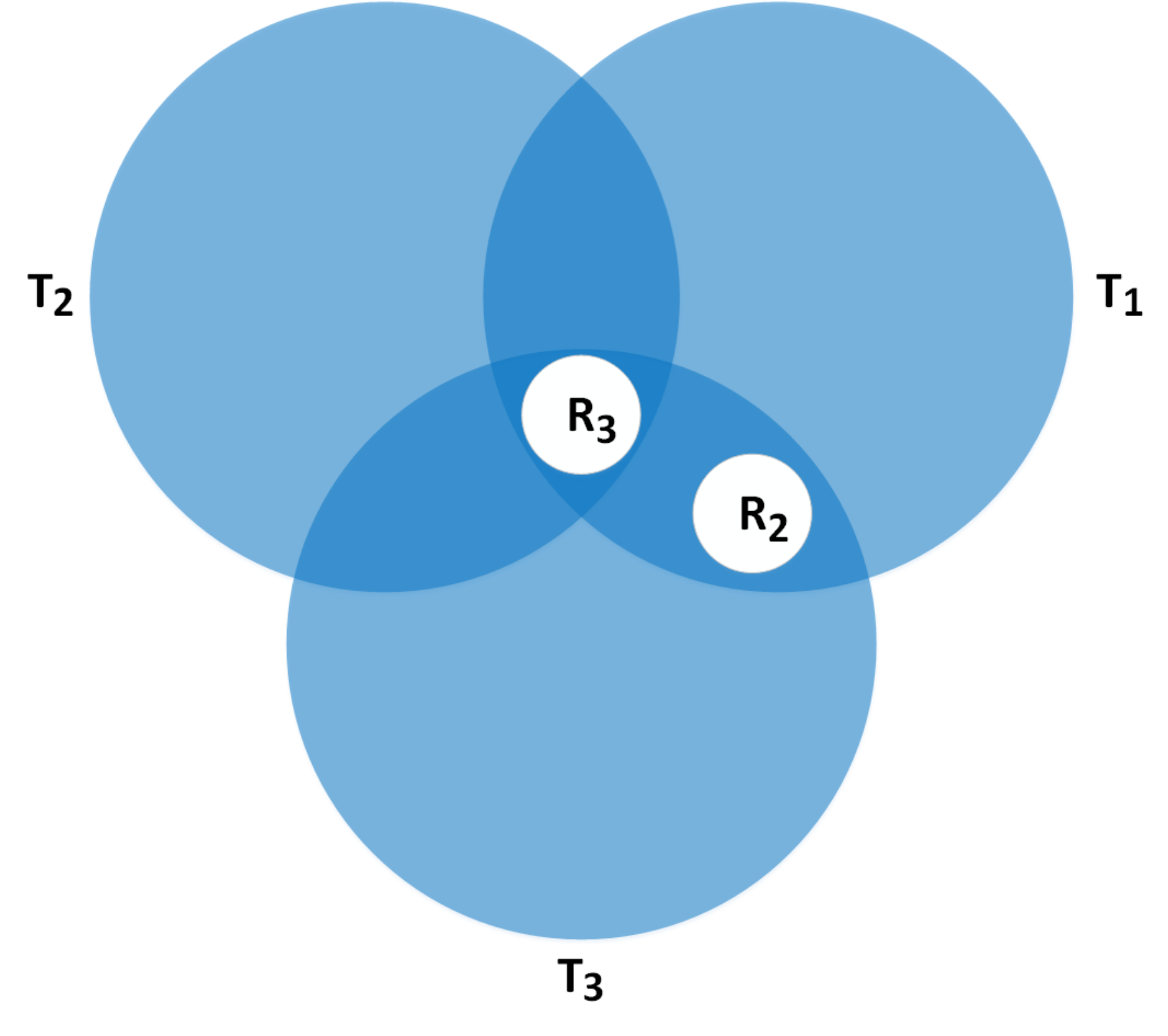} &
\includegraphics[scale=0.154]{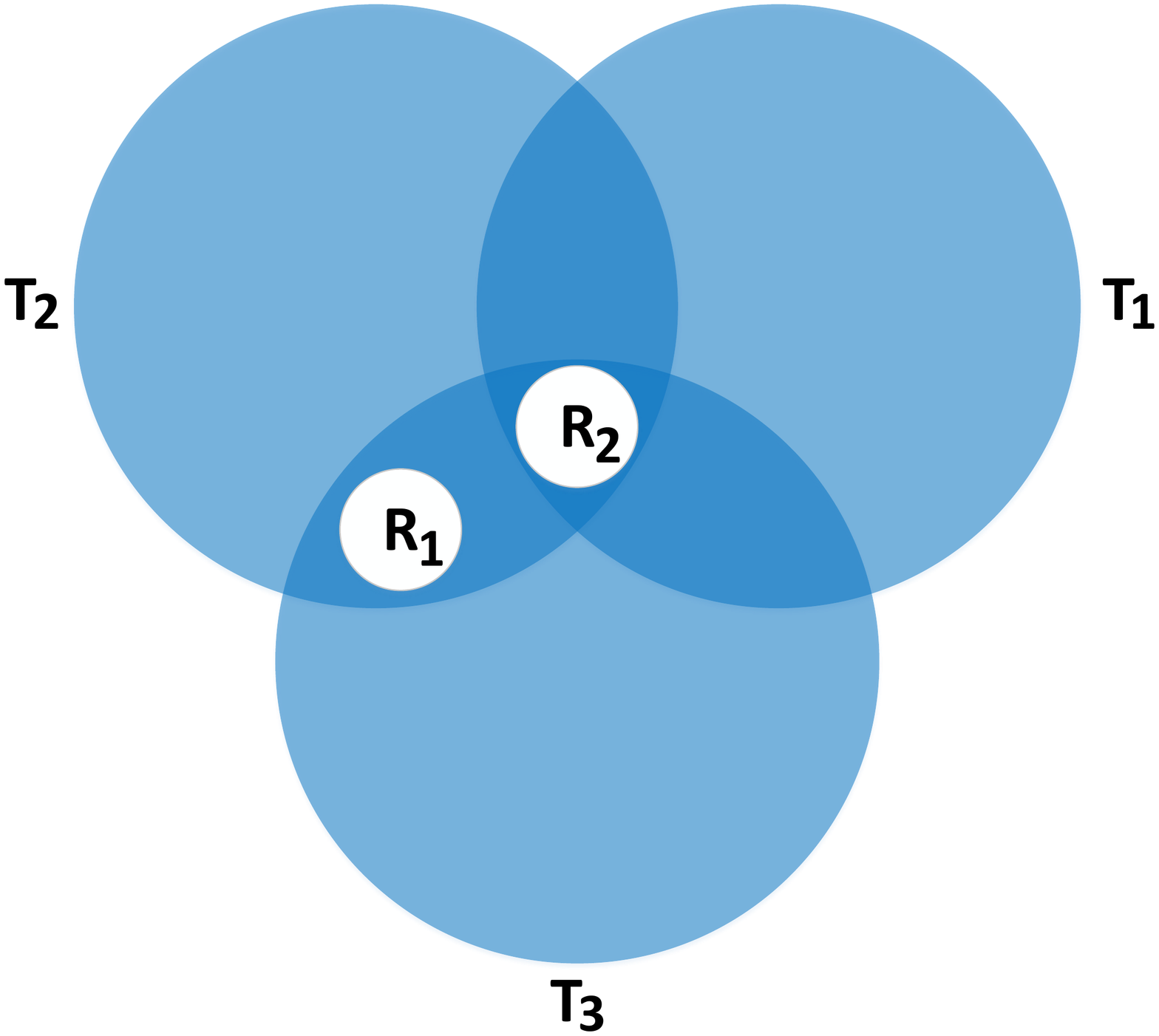} \\
(c) & (d)
\end{array}$
\vspace{-.15in}
\end{center}
\caption{Topologies for (a) three transmitters and three receivers; (b) three transmitters and one receiver;  (c) and (d) three transmitters and two receivers.}
\label{fig:n_3_m_2_lac}
  \vspace{-.1in}
\end{figure}

Next, by applying JRC (Proposition \ref{prop:JRC})  for topologies in Fig. \ref{fig:n_3_m_2_lac} (c) and (d),  the two tuples $(\log{3}, 1, 0)$  and $(0, \log{3}, 1)$  can be obtained, respectively.
Specifically, for the tuple $(0, \log{3}, 1)$,  the number of exclusive transmitters for $R_1$ and $R_3$ are $t_1=0$ and $t_3=2$ while the number of shared transmitters $t_{13}=1$.  Using Proposition \ref{prop:JRC} with $t_{13}^1 = 1$ and $t_{13}^3=0$, the  achievable rate of $R_1$ is $\log(t_1+t_{13}^1+1)=1$, and for $R_3$,  $\log(t_3+t_{13}^3+1)=\log(3)$.  Using the same technique for $R_3$ and $R_2$ shown in  Fig. \ref{fig:n_3_m_2_lac} (c),  the feasible tuple $(\log{3}, 1, 0)$ can be obtained.


Finally, Fig. \ref{fig:Diagram_6} shows the overall achievable rate region as a convex hull of the feasible tuples:   $(0, \log{3}, 1)$,   $(\log{3}, 1, 0)$, $(1,1,1)$, $(0,0,1)$, $(\log{3}, 0, 0)$, $(0,2,0)$.


\begin{figure}
\hspace*{-0.1cm}
  \centering
  \includegraphics[width=2.3in]{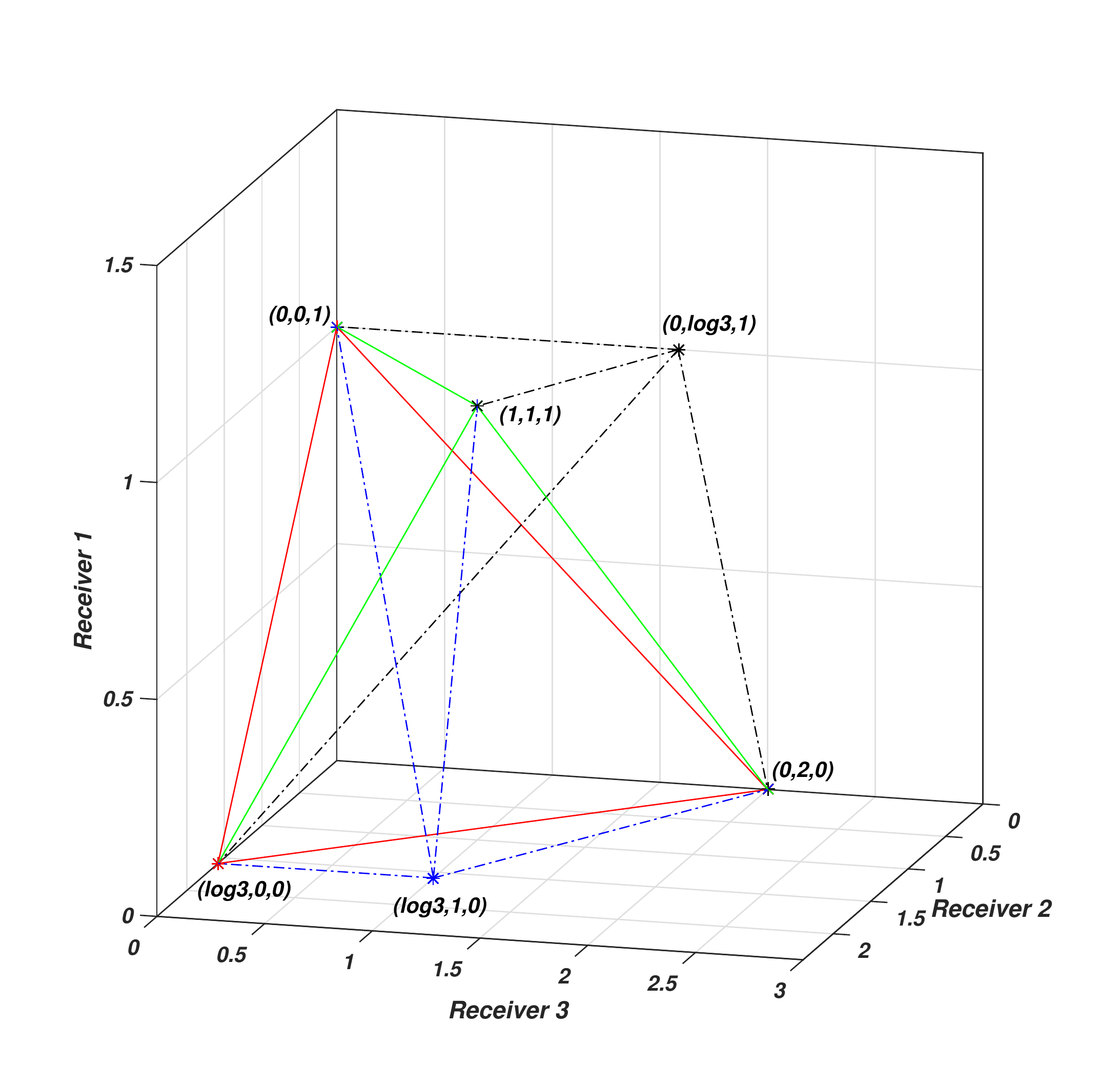}
  \caption{Achievable rate region for three transmitters topology}
  \label{fig:Diagram_6}
  \vspace{-.25in}
\end{figure}
%

\section{Conclusion}
\label{sec:conclusion}
In this paper, we introduce the WiFO system, capable of  improving the wireless capacity of the existing WiFi systems by orders of magnitude.  We describe a cooperative coding schemes called LAC that uses location information to improve the capacity of the receivers in a dense deployment topology.   Both numerical and theoretical results are provided to justify the proposed coding techniques.
\vspace{-.1in}

\medskip

\bibliographystyle{unsrt}
\bibliography{sample}

\end{document}